\def\tsc#1{\csdef{#1}{\textsc{\lowercase{#1}}\xspace}}
\DeclareMathOperator*{\minimize}{minimize}
\newtheorem{proposition}{Proposition}
\newtheorem{remark}{Remark}
\begin{document}
\let\WriteBookmarks\relax
\def\floatpagepagefraction{1}
\def\textpagefraction{.001}

\shorttitle{}    

\shortauthors{Fan and Gu}  

\title [mode = title]{Fly-by transit: A novel door-to-door shared mobility with minimal stops}  



\author[1]{Wenbo Fan} 
\ead{wenbo.fan@polyu.edu.hk}
\cormark[1]

\author[1,2]{Weihua Gu}
\ead{weihua.gu@polyu.edu.hk}

\affiliation[1]{organization={Department of Electrical and Electronic Engineering, The Hong Kong Polytechnic University},
	            addressline={Kowloon},
	            city={Hong Kong},
	             country={China}}
\affiliation[2]{organization={Otto Poon Charitable Foundation Smart Cities Research Institute, The Hong Kong Polytechnic University},
	addressline={Kowloon},
	city={Hong Kong},
	country={China}}


\cortext[1]{Corresponding author}

\begin{abstract}
This paper introduces fly-by transit (FBT), a novel mobility system that employs modular mini-electric vehicles (mini-EVs) to provide door-to-door shared mobility with minimal stops. Unlike existing modular minibus concepts that rely on in-motion coupling and passenger transfers---technologies unlikely to mature soon---FBT lowers the technological barriers by building upon near-term feasible solutions. The system comprises two complementary mini-EV modules: low-cost \textit{trailers} for on-demand feeder trips and high-performance \textit{leaders} that guide coupled trailers in high-speed platoons along trunk lines. Trailers operate independently for detour-free feeder services, while \textit{stationary coupling} at designated hubs enables platoons to achieve economies of scale (EoS). \textit{In-motion decoupling of the tail trailer} allows stop-less operation without delaying the main convoy.

As a proof of concept, a stylized corridor model is developed to analyze optimal FBT design. Results indicate that FBT can substantially reduce travel times relative to conventional buses and lower operating costs compared with e-hailing taxis. Numerical analyses further demonstrate that FBT achieves stronger EoS than both buses and taxis, yielding more than 13\% savings in generalized system costs. By addressing key limitations of existing transit systems, this study establishes FBT as a practical and scalable pathway toward transformative urban mobility and outlines directions for future research.
\end{abstract}



\begin{keywords}
 On-demand mobility \sep Modular mini-electric vehicles \sep Stationary coupling \sep In-motion decoupling
\end{keywords}

\maketitle

\section{Introduction}
\subsection{Background}
Traditional transportation systems have faced persistent challenges in balancing three core properties: accessibility, mobility, and cost-effectiveness (AMC). Accessibility refers to the ubiquitous availability of the service, mobility to the efficiency of movement, and cost-effectiveness to its economic viability. 
Achieving optimal performance across all three dimensions has proven exceedingly difficult for a single transportation mode—whether individual (e.g., taxis, private cars) or collective (e.g., buses, rail transit). This inherent trade-off can be referred to as the AMC trilemma or the “impossible trinity,” as illustrated in Fig. \ref{fig_trinity}, where existing travel modes typically excel in two of these dimensions at the expense of the third. 

To address this trilemma, traditional efforts have focused on integrating different modes into hierarchical systems, such as combining buses or taxis as feeders to rail transit \citep[e.g.,][]{fan2018optimal, Luo2019a}. The rationale is that by leveraging the complementary strengths of each mode, a more balanced AMC profile might be achieved. Nevertheless, in practice, such integration often faces significant hurdles: the operational complexity of synchronizing diverse services, the misalignment of institutional objectives, and the challenge of delivering seamless patron experiences across multiple providers, among others.
\begin{figure}[!ht]
	\centering
	\includegraphics[width=0.5\linewidth]{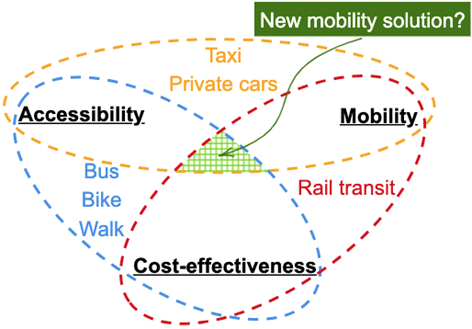}
	\caption{ Impossible trinity of urban mobilities.}
	\label{fig_trinity}
\end{figure}

Recent rapid advances in two transformative technologies—autonomous driving and modular vehicles—present new opportunities to revolutionize urban mobility and revisit the longstanding challenges posed by the AMC trilemma. Unlike previous approaches that primarily sought to integrate existing multi-entity systems, these technological innovations offer the potential for more profound solutions within a single, unified system. Inspired by these developments, this research seeks to investigate a novel modular mobility solution that can inherently balance the three AMC dimensions. 

\subsection{Research motivation and contribution}

The integration of autonomous driving with modular vehicles has led to the emergence of modular autonomous vehicles (MAVs)---also referred to as modular autonomous buses (MAB) when adopt minibuses as vehicle modules---flexible units that can operate individually or couple into larger platoons for higher capacity \citep{qu2021communications}. This concept has the potential to simultaneously enhance accessibility and mobility while exerting only modest pressure on cost-effectiveness \citep{lin2022autonomous,luo2025optimal}. As a result, MAVs have stimulated a surge in research on novel operational strategies and new paradigms for modular transit services \citep{chen2020operational, chen2021designing, cheng2024autonomous, hannoun2022modular, khan2025no}. Prototype MAVs have been manufactured and tested in Europe \citep{boldrini2017stackable}, Dubai \citep{gecchelin2022selectively}, Beijing \citep{he2025lidar,lin2025big}, and New Zealand \citep{ye2024exploring}.

Nevertheless, current MAV studies have been criticized for relying on overly optimistic assumptions regarding future technologies—particularly in-motion coupling and patron transfers between moving vehicles in a dynamic, mixed-traffic urban environment. These features, while conceptually promising, still lack robust theoretical foundations and practical validation, raising concerns about their feasibility in the near term \citep{he2025lidar, luo2025optimal}.

In response, this paper proposes a new modular mobility solution, termed “fly-by transit” (FBT), designed around technologies that are feasible in the short to medium term. We aim to demonstrate that FBT systems hold significant promise of striking a better balance in the longstanding AMC trilemma and filling the gap in urban mobility, as illustrated in Fig. \ref{fig_trinity}. Our contributions include multiple innovations in both vehicle design and operational strategies, enabling rapid door-to-door transit with minimal stops. As a proof of concept, we develop optimal design models for FBT corridors, showcasing significant Economies of Scale (EoS) in terms of reduced average system costs as demand increases. Building on the promising results, the paper proposes a research agenda to advance foundational theories and algorithms that optimize FBT designs and operations in general networks, thereby facilitating the real-world implementation of these designs.

The remainder of this paper is organized as follows. The next section reviews literature on modular mobility to position FBT in comparison to existing MAV systems. Section \ref{sec_concept} conceptualizes FBT from the perspectives of technological requirements, operational characteristics, and patron experiences. Section \ref{sec_proof_of_concept} provides a proof of concept by modeling an FBT corridor design problem and demonstrating FBT’s superior performance compared to traditional taxi and bus systems. Section \ref{sec_agenda} further proposes a research agenda for the incubation of FBT. The final section concludes the paper.

\section{Literature review}
We classify the existing literature on MAVs into four categories, as outlined in Table \ref{tab_category}: (1) conceptual development, (2) operations with variable capacity, (3) operations with in-motion transfer (IMT), and (4) microscopic dynamics. To maintain a focused analysis, our review excludes studies that do not explicitly leverage modularity during operation \citep[e.g.,][]{ding2024exploring, ng_semi--demand_2024, tu_-demand_2024}. We also omit alternative modular vehicle concepts outside the road transportation domain, such as those in railways and military \citep[e.g.,][]{pei2023robust, li2020ai}, as well as designs limited to modular chassis with unpowered cabins that lack independent operability \citep[e.g.,][]{shen2023dynamic, friedrich2019new}. 
\begin{table}[!ht]
	\centering
	\caption{Literature on Modular Autonomous Vehicles (MAVs).}
	\label{tab_category}
	\resizebox{\columnwidth}{!}{%
		\begin{tabular}{lll}
			\hline
			\multicolumn{1}{|l|}{\textbf{Category}} &
			\multicolumn{1}{l|}{\textbf{Work}} &
			\multicolumn{1}{l|}{\textbf{Key feature}} \\ \hline
			\multicolumn{1}{|l|}{\multirow{6}{*}{MAV concepts}} &
			\multicolumn{1}{l|}{\begin{tabular}[c]{@{}l@{}} \cite{boldrini2017stackable,gobillot2018esprit} \\ \cite{iacobucci2022multi} \end{tabular}} &
			\multicolumn{1}{l|}{EU-MSC$^*$: single mini-EV module} \\ \cline{2-3} 
			\multicolumn{1}{|l|}{} &
			\multicolumn{1}{l|}{\cite{gecchelin2022selectively,gecchelin2017connectable}} &
			\multicolumn{1}{l|}{Dubai-MAB: single minibus module} \\ \cline{2-3} 
			\multicolumn{1}{|l|}{} &
			\multicolumn{1}{l|}{\cite{lin2025big}} &
			\multicolumn{1}{l|}{Beijing-MAB: single minibus module} \\ \cline{2-3} 
			\multicolumn{1}{|l|}{} &
			\multicolumn{1}{l|}{\cite{liu2018deployment,rau2019dynamic}} &
			\multicolumn{1}{l|}{Singapore-MAB: single minibus module} \\ \cline{2-3} 
			\multicolumn{1}{|l|}{} &
			\multicolumn{1}{l|}{\cite{luo2025optimal,zhang2020modular}} &
			\multicolumn{1}{l|}{Variants of Dubai-MAB} \\ \cline{2-3} 
			\multicolumn{1}{|l|}{} &
			\multicolumn{1}{l|}{\textbf{This paper}} &
			\multicolumn{1}{l|}{FBT: two mini-EV modules} \\ \hline
			\multicolumn{1}{|l|}{\multirow{6}{*}{Variable capacity}} &
			\multicolumn{1}{l|}{\cite{chen2020operational,chen2019operational}} &
			\multicolumn{1}{l|}{Shuttle service} \\ \cline{2-3} 
			\multicolumn{1}{|l|}{} &
			\multicolumn{1}{l|}{\begin{tabular}[c]{@{}l@{}}\cite{chen2022continuous,chen2021designing,dai2020joint} \\ \cite{ji2021scheduling, liu2020using,liu2023integrated,shi2021operations} \\ \cite{tian2023joint, wang2023optimal,zhang2024optimising}\end{tabular}} &
			\multicolumn{1}{l|}{Single line} \\ \cline{2-3} 
			\multicolumn{1}{|l|}{} &
			\multicolumn{1}{l|}{\cite{guo2018stochastic,liu2021improving,tang2024optimisation}} &
			\multicolumn{1}{l|}{(Semi-)flexible route service} \\ \cline{2-3} 
			\multicolumn{1}{|l|}{} &
			\multicolumn{1}{l|}{\cite{gong2021transfer,guo2023modular}} &
			\multicolumn{1}{l|}{Customized bus service} \\ \cline{2-3} 
			\multicolumn{1}{|l|}{} &
			\multicolumn{1}{l|}{\cite{dakic2021design,pei2021vehicle,wang2024multimodal}} &
			\multicolumn{1}{l|}{Network} \\ \cline{2-3} 
			\multicolumn{1}{|l|}{} &
			\multicolumn{1}{l|}{\textbf{This paper}} &
			\multicolumn{1}{l|}{Corridors, networks, \& feeders} \\ \hline
			\multicolumn{1}{|l|}{\multirow{3}{*}{In-motion transfer}} &
			\multicolumn{1}{l|}{\begin{tabular}[c]{@{}l@{}} \cite{lin2024bunching,liu2024alleviating,khan2023application} \\ \cite{khan2025no,khan2023bus} \\ \cite{romea2021analysis,tian2022planning,zou2024operational} \end{tabular}} &
			\multicolumn{1}{l|}{Single line} \\ \cline{2-3} 
			\multicolumn{1}{|l|}{} &
			\multicolumn{1}{l|}{\begin{tabular}[c]{@{}l@{}}\cite{caros2021day,cheng2024autonomous} \\ \cite{fu2023dial, khan2024seamless} \\ \cite{zermasli2023feeder,zhang2020modular} \end{tabular}} &
			\multicolumn{1}{l|}{Network} \\ \cline{2-3} 
			\multicolumn{1}{|l|}{} &
			\multicolumn{1}{l|}{\textbf{This paper}} &
			\multicolumn{1}{l|}{Systems without IMT} \\ \hline
			\multicolumn{1}{|l|}{\multirow{3}{*}{\begin{tabular}[c]{@{}l@{}}Microscopic \\ dynamics\end{tabular}}} &
			\multicolumn{1}{l|}{ \begin{tabular}[c]{@{}l@{}} \cite{han2024planning,he2025lidar,li2022trajectory} \\ \cite{ma2025safety, ye2024modular} \end{tabular}} &
			\multicolumn{1}{l|}{\begin{tabular}[c]{@{}l@{}} In-motion coupling; \\ Inter-vehicle passage \end{tabular} } \\ \cline{2-3} 
			\multicolumn{1}{|l|}{} &
			\multicolumn{1}{l|}{\cite{li2023trajectory,ye2024modular}} &
			\multicolumn{1}{l|}{In-motion decoupling} \\ \cline{2-3} 
			\multicolumn{1}{|l|}{} &
			\multicolumn{1}{l|}{\textbf{This paper}} &
			\multicolumn{1}{l|}{\begin{tabular}[c]{@{}l@{}}Stationary coupling; \\ In-motion decoupling tail trailer \end{tabular}} \\ \hline
			\multicolumn{3}{l}{$^*$ Autonomous driving is not a must for the original concept \citep{gobillot2018esprit}.}
		\end{tabular}%
	}
\end{table}

\subsection{Conceptual development of MAVs}
For ease of reference, we designate existing MAV concepts by the names of the cities in which they were originally proposed or tested. In addition to the proposed FBT, we have identified four prominent MAV concepts in the literature: the EU-MSC, Dubai-MAB, Beijing-MAB, and Singapore-MAB, where MSC stands for Modular Shared Cars, which belong to the mini-electric vehicle (mini-EV) category and differ from the minibuses in the MAB concepts. Among these, EU-MSC and Dubai-MAB have been prototyped and publicly demonstrated, showcasing the functionality of key components, such as the coupling/decoupling mechanism, propulsion system, battery management system, and inter-vehicle communication system. The detailed characteristics of these concepts are compared in Table \ref{tab_bigtable}. The following discussion synthesizes their key similarities and differences.

EU-MSC, also known as ESPRIT or stackable cars, is designed to accommodate one to three seated patrons with the capability of stationary coupling and decoupling \citep{boldrini2017stackable, gobillot2018esprit, iacobucci2022multi}. The primary objective of EU-MSC is to improve fleet rebalancing efficiency in carsharing services. By coupling multiple unoccupied vehicles, a single relocator can drive and relocate the whole platoon simultaneously instead of repositioning each vehicle individually as in traditional carsharing systems. Specifically, one dedicated relocator can operate a platoon of up to eight vehicles, while a customer can drive one vehicle coupled with one vacant unit \citep{boldrini2017relocation}.

Dubai-MAB and Beijing-MAB both utilize modular minibuses for seated patrons and standees but differ in capacity: the former holds 4--10 patrons per unit \citep{gecchelin2017connectable,gecchelin2022selectively}, while the latter accommodates 10--25 patrons \citep{lin2025big}. Both concepts rely on fully autonomous driving, in-motion coupling and decoupling, and inter-vehicle passage across physically platooned MAVs to facilitate IMT. In contrast, Singapore-MAB (also called DART) employs virtual platooning via wireless vehicle-to-vehicle (V2V) communication and cooperative adaptive cruise control, which precludes IMT between vehicles \citep{liu2018deployment, rau2019dynamic}. There are also some variant concepts in \citep{zhang2022efficiency} that utilize semi-autonomous minibuses (with an operator only in the lead vehicle of the platoon) and in \citep{luo2025optimal} that exclude in-motion coupling and transfers. 

\subsection{Operations with variable capacity}
All the above MAV concepts share the property of variable capacity---a feature that is unprecedented in traditional fixed vehicle or bus models. This capability allows MAV-based services to flexibly accommodate fluctuating demand across both temporal and spatial dimensions. The variable capacity of FBT is manifested in the adaptability of platoon lengths, tailored to the specific demand profiles of different lines and around stations. Such adaptability in capacity not only enhances transport efficiency but also significantly improves operational cost-effectiveness.

Existing research on this topic spans a broad range of transit service models, as summarized in Table \ref{tab_category}, including one-to-one shuttle services \citep[e.g.,][]{chen2019operational}, fixed-route lines \citep[e.g.,][]{cao_optimizing_2025}, multi-line networks \citep[e.g.,][]{wang2024multimodal}, flexible-route bus services \citep[e.g.,][]{tang2024optimisation}, and on-demand feeder systems; see \citep{luo2025optimal} for a recent review. 

Nevertheless, while variable capacity in existing MAVs allows for flexible adaptation to temporal-spatial demand variations, it does not address the challenge of serving patrons' diverse origins and destinations (ODs), which necessitates frequent intermediate stops for boarding and alighting as in traditional transit systems. This limitation is expected to be addressed by IMT, as discussed below. 


\subsection{Operations with in-motion transfer} \label{subsec_IMT}
The IMT is introduced to redistribute onboard patrons among moving vehicles, ensuring that those with similar destinations share the same vehicle traveling in the same direction or stopping at the same station. It is anticipated to reduce or even eliminate patrons’ out-of-vehicle transfers and the associated delays, a persistent challenge in conventional bus systems and Singapore-MAB systems (which lack IMT capability due to the virtual coupling). 

Most existing research on IMT has focused on innovative operational strategies, often assuming that IMT can occur anywhere, at any time, and any number of times \citep[e.g.,][]{zhang2020modular,zermasli2023feeder,cheng2024autonomous}. 
A limited number of studies have introduced constraints to mitigate IMT's negative impacts on patron experience and general traffic, such as limiting the number of IMTs \citep{caros2021day}, restricting IMT to specific, designated locations \citep{tian2022planning}, and minimizing patrons’ total number of IMTs and MAV’s decoupling and recoupling maneuvers \citep{wu2021modular}. 

Notably, IMT has been leveraged to enable stop-less transit services, as demonstrated in the works of \citep{khan2025no,khan2024seamless,romea2021analysis}. These studies represent the first application of the MAV concept to minimize intermediate stop delays for boarding and alighting---a factor that accounts for more than 30\% of total bus travel time \citep{walker2011human}. Their proposed system, termed the Stop-Less Autonomous Modular (SLAM) bus, involves a main convoy that passes all stops without halting, while only the tail pod detaches and stops to allow patrons to alight. Patrons intending to alight must move in advance to the designated detaching pod. For boarding, patrons arrive at the stop and wait in a standby MAV, which will be dispatched to couple in motion with the passing platoon. Consequently, the commercial speed of the SLAM buses is competitive with that of taxis and private cars.

Nonetheless, the SLAM bus system still requires patrons to walk to and from stops to access or egress the service, which reduces the overall door-to-door trip speed. Thus, while IMT-based systems offer substantial improvements in line-haul efficiency, further innovations are needed to optimize the complete journey of patrons. 

\subsection{Microscopic dynamics of coupling and decoupling}
Although in-motion coupling forms the technical basis of most MAV studies, it is not a premise of FBT. It is recently noticed that achieving reliable in-motion coupling is substantially more complex than virtual platooning, as it requires highly precise motion/attitude control of multiple vehicles at high cruising speeds \citep{he2025lidar, ma2025safety}. To date, this challenge has only been partially addressed in a handful of pioneering studies \citep{han2024planning,he2025lidar,li2022trajectory, ma2025safety}, which proposed trajectory planning solutions that rely on several idealized assumptions and remain far from practical on public roads. Additionally, \citet{ye2024modular} have also examined the broader impacts of MAV coupling and decoupling operations on traffic dynamics in mixed traffic environments.

Insights from other fields further illuminate these challenges. For example, research on rail systems \citep{nold2021dynamic} highlights substantial technical barriers to in-motion coupling, even for track-guided trains that are not subject to general traffic or signalized intersections. These systems must reconcile speed differentials while minimizing collision energy, underscoring the inherent complexity of the topic. Although in-motion docking has been studied and implemented in spacecraft \citep{wang2023brief}, the space environment differs fundamentally from urban traffic and does not provide directly transferable solutions.

\subsection{Research gaps}
The review highlights that most existing MAV concepts adopt a minibus-like vehicle module with a capacity of over four seated and standees. These systems inherit the challenges associated with patrons' diverse OD pairs that may necessitate en-route detours and stops, similar to traditional buses and rail trains. To address these persistent issues, existing research has proposed IMT-based solutions to reorganize onboard patrons with the closest destinations, aimed at minimizing out-of-vehicle transfers and reducing stops for boarding and alighting \citep[e.g.,][]{khan2024seamless, romea2021analysis}. 
Nevertheless, IMT-based approaches present several limitations that remain unresolved and are unlikely to be addressed in the near future:
\begin{itemize}
	\item The core technology underlying IMT—in-motion coupling—demands extremely high precision in vehicle control and necessitates complex vehicle designs, including internal passageways and automatic docking doors. These technologies are expected to be under development in the foreseeable future. 
	\item IMT is unsuitable for patrons with special needs, such as the elderly and disabled, who constitute an inextricable segment of public transit users.
	\item IMT may be inconvenient or even hazardous, particularly when patrons move between crowded vehicles traveling at high speeds in complex urban environments. 
	\item The last-mile problem persists, as most existing MAV concepts still rely on traditional access modes (e.g., walking), and patrons still have to wait outside the vehicle at origin stops.
\end{itemize}

To address these research gaps, we propose the FBT using modular mini-vehicles that reduce technological barriers by eliminating the need for in-motion coupling and transfers, as well as inter-vehicle passageway designs. To align with current and near-term technological capabilities, we conservatively restrict FBT operations to (i) \textit{stationary coupling} and (ii) \textit{in-motion decoupling of the tail trailer}, of which the technical feasibility is provided in the next section. This approach represents a cautious initial step toward implementation-ready mobility solutions. As in-motion coupling technology matures and becomes viable for public road deployment, it may offer additional benefits for specific user groups under appropriate conditions. 

The formal development of the FBT concept is presented as follows.

\section{Concept development} \label{sec_concept}
\subsection{ Fly-by transit concept}
Unlike existing modular mobility concepts, FBT contains two complementary types of autonomous mini-EV modules, as shown in Fig. \ref{fig_concept}: (i) \textit{trailers} – small-battery electric personalized pods seating 1–2 patrons (and customizable for larger capacities for special purposes); and (ii) \textit{leaders} – guided/propulsive pods equipped with high-capacity batteries and motors. Multiple trailers can run independently for short-range, individual transport, and couple physically with a leader to form a platoon for long-range and high-speed transport, while being charged en route.
\begin{figure}[!ht]
	\centering
	\includegraphics[width=0.6\linewidth]{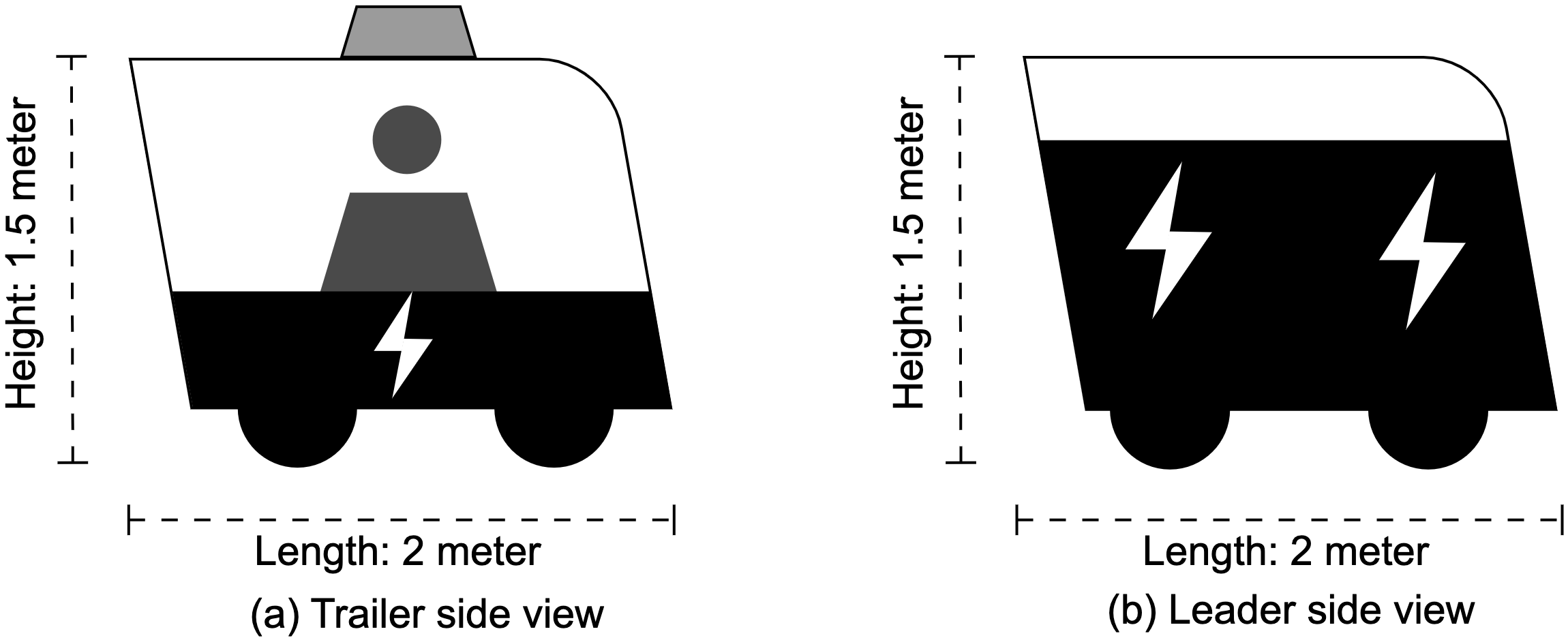}
	\caption{FBT vehicle modules: (a) trailer, (b) leader, compact design for one or two seated with no inter-vehicle passageway.}
	\label{fig_concept}
\end{figure}

FBT enables door-to-door services with minimal intermediate stops through the following operations: 
\begin{enumerate}[label=(\roman*)]
	\item Trailers operate like ride-hailing taxis, transporting individual patrons to the closest FBT stations with no detours; 
	\item At stations, trailers form into physically connected platoons \textit{in descending order of trip length} in the operational direction. This can be achieved through \textit{stationary coupling}, utilizing current technologies as elaborated in Section \ref{subsec_tech}.
	\item Platoons, propelled by the leaders, travel at high speed along arterial corridors, and \textit{only the tail trailer decouples} upon approaching its destination without delaying the convoy. As each trailer serves one or two patrons with the same destination, no transfer between moving vehicles is required. Note that the in-motion decoupling need not be confined to stations but can occur at any location closest to the patron’s destination.
	\item The leader, after all trailers decoupled, stops at the closest station for charging and awaits the next assembling and dispatching. 
\end{enumerate}

To clarify the FBT concept, Table \ref{tab_bigtable} distinguishes it from existing MAV concepts\footnote{Traditional non-modular vehicles have been utilized for a point-to-point transit system featuring both stop-less trunk and local feeder networks \citep{cortes_design_2002}. Nevertheless, their system requires passengers to walk and transfer at both ends of their trips and involves multiple intermediate stops for pick-ups and drop-offs in the local network.} by comparing vehicular characteristics, operational features, and patron experience. 

The concept most similar to FBT is EU-MSC, which is based on a \textit{single} mini-EV module, analogous to the FBT trailer. However, without collective trunk transport, the EU-MSC system would be limited to short-range individual transport services, essentially operating as a one-way carsharing system. 

Another related mobility solution operates the Dubai-MAB in two modes: trailers serve as feeders to main modules, and they couple at stations to form platoons along main lines \citep{zhang2020modular, zou2024operational}. Nevertheless, the use of a single-type MAB with a capacity of over four passengers presents several limitations relative to FBT: (a) serving multiple passengers requires detours for pick-up and drop-off, causing delays; (b) relaying on IMT to redistribute passengers to the correct trailer before detachment; and (c) trailer detachment is not restricted to the rear, requiring in-motion reconfiguration of intermediate trailers and generating additional traffic disturbances. 

From the patron's perspective, FBT provides an enhanced and seamless ride experience. A typical FBT trip comprises three segments: origin-station, station-trunk line, and trunk line-destination, in which her travel delays will be significantly reduced, thanks to the absence of intermediate stops, compared to traditional buses. Although her travel distances and times may exceed those of direct e-hailing taxis, the scalability of low-cost trailers combined with platoon-level EoS is expected to yield lower operational costs and thus more affordable fares. 

Moreover, from the perspective of external stakeholders such as general traffic participants, FBT {consumes less road space due to zero-gap vehicle platoons (than virtually coupled platoons or conventional taxis) and presents fewer disturbances and safety risks due to its operational simplicity (compared to IMT-based MAV concepts). }
The following section provides a preliminary techno-economic analysis to assess the technological and economic feasibility of FBT.

\begin{landscape}
\begin{longtable}[c]{|lllllll|}
\caption{Qualitative comparison between Fly-by transit and existing MAV concepts.}
\label{tab_bigtable}\\
\hline
\multicolumn{1}{|l|}{} & \multicolumn{2}{l|}{\textbf{Fly-by transit}} & \multicolumn{1}{l|}{\textbf{EU-MSC}} & \multicolumn{1}{l|}{\textbf{Dubai-MAB}} & \multicolumn{1}{l|}{\textbf{Beijing-MAB}} & \textbf{Singapore-MAB} \\ \hline
\endfirsthead
\endhead
\multicolumn{7}{|l|}{\textit{Vehicular characteristics}} \\ 
\hline
\multicolumn{1}{|l|}{\multirow{2}{*}{Vehicle}} & \multicolumn{1}{c|}{\includegraphics[width=0.075\linewidth]{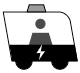}} & \multicolumn{1}{c|}{\includegraphics[width=0.08\linewidth]{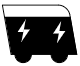}} & \multicolumn{1}{c|}{\includegraphics[width=0.2\linewidth]{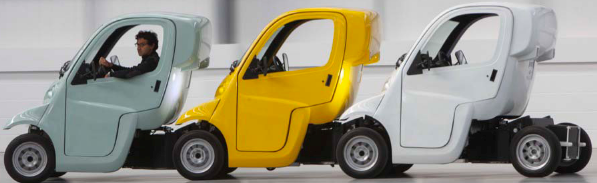}} & \multicolumn{1}{c|}{\includegraphics[width=0.1\linewidth]{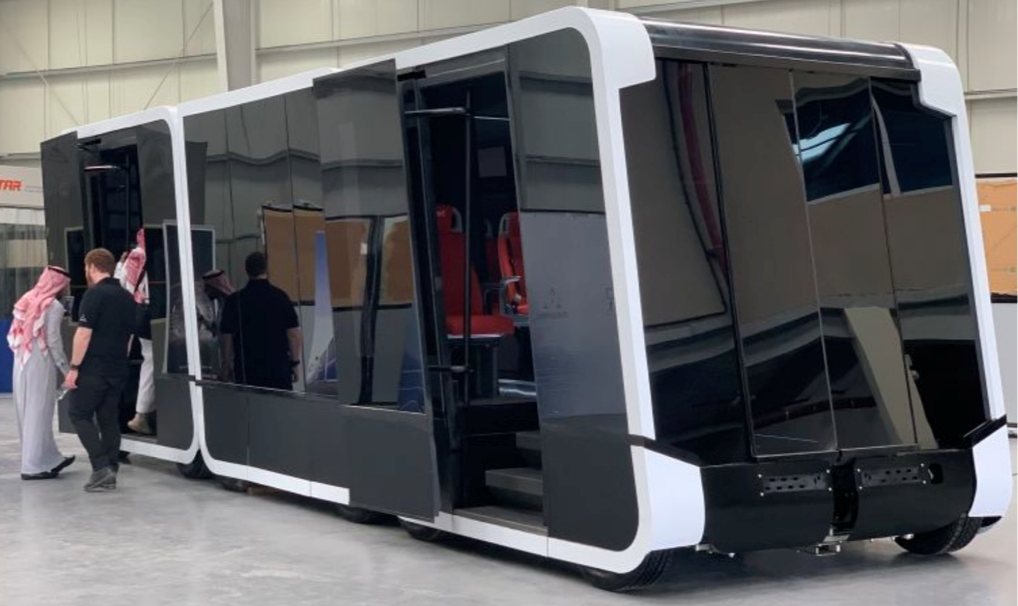}} & \multicolumn{1}{c|}{\includegraphics[width=0.1\linewidth]{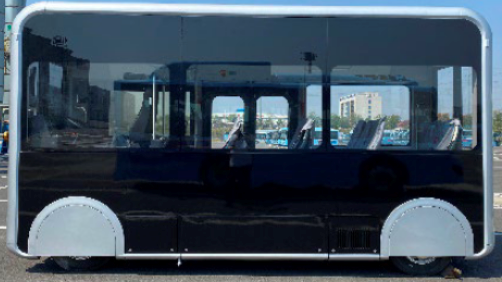}} & \multicolumn{1}{c|}{\includegraphics[width=0.1\linewidth]{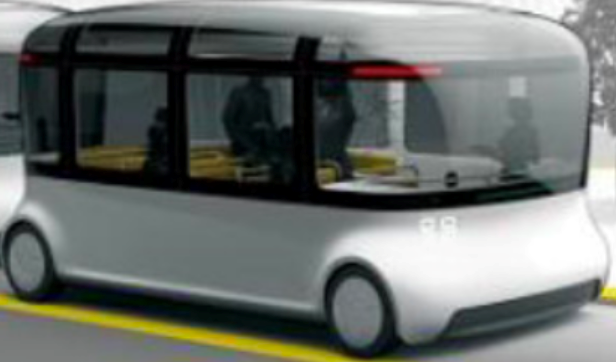}} \\ 
\cline{2-7} 
\multicolumn{1}{|l|}{} & \multicolumn{2}{l|}{\makecell[l]{- Two mini-EV modules: \\Trailers (with smaller battery,\\ low-end motor), and leaders \\(as guided vehicle);\\ - Capacity of 1-2 seated \\patrons;\\ - Compat design for seated;\\ - Battery capacity: \\$\sim$50 km range for trailers; \\$\sim$300 km range for leaders}} & \multicolumn{1}{l|}{\makecell[l]{- Single mini-EV module;\\ - Capacity of 1-3 seated \\patrons;\\ - Compat design for seated;\\ - Battery capacity:\\ over 50 km range;\\ - Prototype vehicles \\ \citep{iacobucci2022multi}}} & \multicolumn{1}{l|}{\makecell[l]{- Single electric minibus \\module;\\ - Capacity of 4-10 seated \\ patrons \& standees;\\ - Designed for moving \\standees;\\ - Inter-vehicle passage doors;\\ - Battery capacity: \\$\sim$300 km range;\\ - Vehicle model for reference:\\ Prototype vehicles}} & \multicolumn{2}{l|}{\makecell[l]{- Single electric minibus module;\\ - Capacity of 10-25 seated patrons and standees;\\ - Designed for moving standees;\\ - Battery capacity: $\sim$300 km range;\\ - Prototype vehicles \\ \citep{lin2025big,ye2024exploring};}}  \\ 
\hline
\multicolumn{1}{|l|}{Platoon} & \multicolumn{2}{l|}{\makecell[l]{- Mechanically \& \\electrically connected;\\ - Stationary coupling;\\ - In-motion decoupling \\of the tail MAV;\\ - MAVs coupled in order of \\their trip lengths; \\- No inter-vehicle passage;\\ - In-motion charging;\\ - Maximum 10 modules;}} & \multicolumn{1}{l|}{\makecell[l]{- Mechanically \& \\electrically connected;\\ - Stationary coupling;\\ - Stationary decoupling;\\ - Vacant vehicles coupled \\with one occupied for \\rebalancing only;\\ - No inter-vehicle passage;\\ - Inter-vehicle charging;\\ - Maximum 8 modules;}} & \multicolumn{2}{l|}{\makecell[l]{- Mechanically \& electrically connected;\\ - In-motion coupling \& decoupling \\ of any MAV in the platoon;\\ - MAVs coupled in no particular order;\\ - Inter-vehicle docking for passage;\\ - In-motion charging;\\ - Maximum 4-6 modules;}} & \makecell[l]{- Virtually connected;\\ - In-motion coupling \\\& decoupling of any \\MAV in the platoon;\\ - MAVs coupled in no \\particular order;\\ - No inter-vehicle passage;\\ - No in-motion charging;\\ - Maximum 10 modules;} \\ 
\hline
\multicolumn{1}{|l|}{Auto-driving} & \multicolumn{3}{l|}{\makecell[l]{Compatible for\\ - fully auto-driving;\\ - semi auto-driving with human driver on the lead vehicle;\\ - human driving;}} & \multicolumn{2}{l|}{\makecell[l]{Require\\ - fully auto-driving;\\ - high-precision automatic docking of in-motion \\vehicles;}} & \makecell[l]{Compatible for\\ - fully auto-driving;\\ - semi auto-driving with \\human driver on the lead \\vehicle;\\ - human driving;} \\ 
\hline
\multicolumn{7}{|l|}{\textit{Operational characteristics}} \\ 
\hline
\multicolumn{1}{|l|}{\makecell[l]{Operational \\schemes}} & \multicolumn{2}{l|}{\makecell[l]{- Door-to-door services;\\ - Trunk lines for stopping-less \\platoons;\\ - Ride-hailing for individuals\\ in feeder zones;\\ - Stations for coupling \\vehicles;\\ - Transfer at stations without\\ out-of-vehicle walking \\\& waiting;}} & \multicolumn{1}{l|}{\makecell[l]{- Carsharing services in forms \\of station-based or free-floating;}} & \multicolumn{2}{l|}{\makecell[l]{- Stop-to-stop services;\\ - Transit lines for stopping-less platoons;\\ - Stops for patrons walking to/from decoupled \\vehicles;\\ - In-vehicle transfers in moving platoons;}} & \makecell[l]{- Stop-to-stop services;\\ - Transit lines for platoons \\stopping at every stop;\\ - Stops for patrons walking \\to/from stopped \\platoons/vehicles;\\ - Out-of-vehicle transfers \\at stops;} \\ 
\hline
\multicolumn{7}{|l|}{\textit{Passenger experience}} \\ 
\hline
\multicolumn{1}{|l|}{\makecell[l]{Trip \\ components}} & \multicolumn{2}{l|}{\makecell[l]{- No walking;\\ - Waiting at home for pick-ups;\\ - Waiting in vehicles at origin \\ stations;\\ - Not driving the vehicle;\\ - No stopping delays \\along lines;\\ - Waiting in vehicles at transfer \\stations}} & \multicolumn{1}{l|}{\makecell[l]{- Walking to/from shared \\vehicles;\\ - Driving the vehicle by \\themselves;\\ - No en-route stopping delays;}} & \multicolumn{2}{l|}{\makecell[l]{- Walking to/from stops;\\ - Waiting out of vehicle at origin stations;\\ - Not driving the vehicle;\\ - No stopping delays along lines;\\ - Inter-vehicle transfers without waiting;}} & \makecell[l]{- Walking to/from stops;\\ - Waiting out of vehicles at \\origin stations;\\ - Not driving the vehicle;\\ - Stopping delays for \\boarding and alighting \\patrons along lines;\\ - Waiting out of vehicles \\at transfer stations} \\ 
\hline
\end{longtable}
\end{landscape}

\subsection{Preliminary techno-economic analysis (TEA)} \label{subsec_tech}
FBT’s technological feasibility rests on four near-term advancements: 
\begin{itemize}
	\item Stationary coupling, using a controlled static environment to reliably form platoons via magnetic-mechanical coupling—a practical choice given the current immaturity of in-motion coupling \citep{he2025lidar}. Prototype EU-MSCs have demonstrated the feasibility of a self-aligning, automatically locking coupling mechanism, as shown in Fig. \ref{fig_coupling}, following the ``Scharfenberg'' coupling principle \citep{boldrini2017stackable}. 
	\item In-motion decoupling, enabled by advanced V2V communication and control systems. Similar operations, called ``slipping,'' have been implemented in rail trains for over a century \citep{nold2024train}. Recent advancements in platoon splitting \citep{li2023trajectory} also support this. The proposed future research in Section \ref{sec_agenda} will further enhance this technology in the urban traffic environment. 
	\item Operation of physically connected platoons, demonstrated in the simulation and field tests of EU-MSC  \citep{gobillot2018esprit} and Dubai-MAB\footnote{\url{https://www.youtube.com/watch?v=kJlQaCIUHTI}}, which established the feasibility of inter-vehicle communication and propulsion systems, as well as the stability in various maneuvering scenarios.
	\item Autonomous or semi-autonomous driving (with in-vehicle/remote human operators), supported by sensor fusion with sub-decimeter accuracy \citep{tu2024advancements,zhang2024multisensor}, real-time data processing, and artificial intelligence (AI), evident in growing robotaxi fleets in the U.S. and China. This paper will also propose transition plans for FBT based on traditional human-driven vehicles (HVs); see Section \ref{sec_agenda}.
\end{itemize}
\begin{figure}[!ht]
	\centering
	\includegraphics[width=0.75\linewidth]{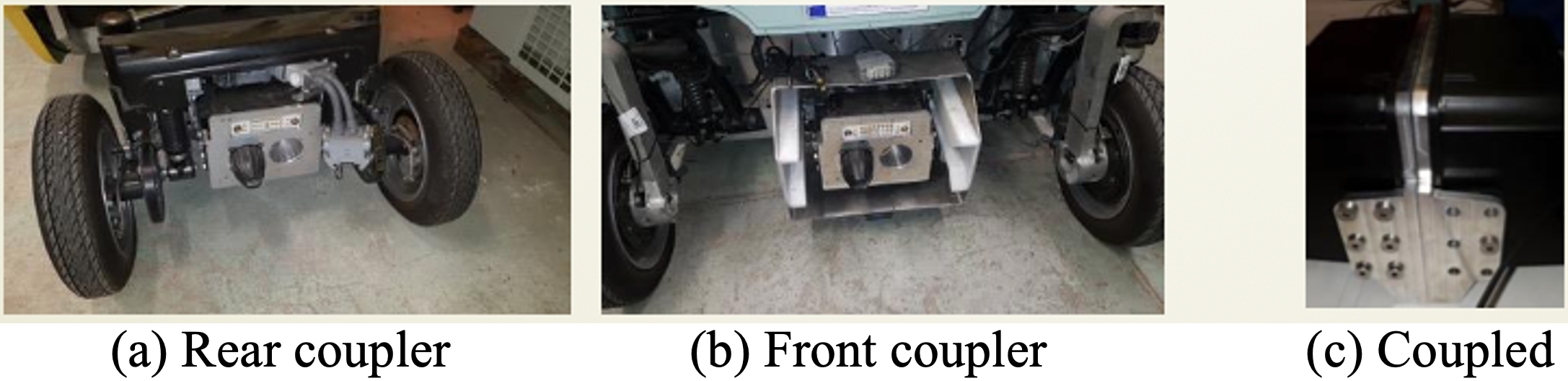}
	\caption{Coupling and decoupling system in EU-MSC \citep{gobillot2018esprit}.}
	\label{fig_coupling}
\end{figure}
\begin{figure}[!ht]
	\centering
	\includegraphics[width=0.8\linewidth]{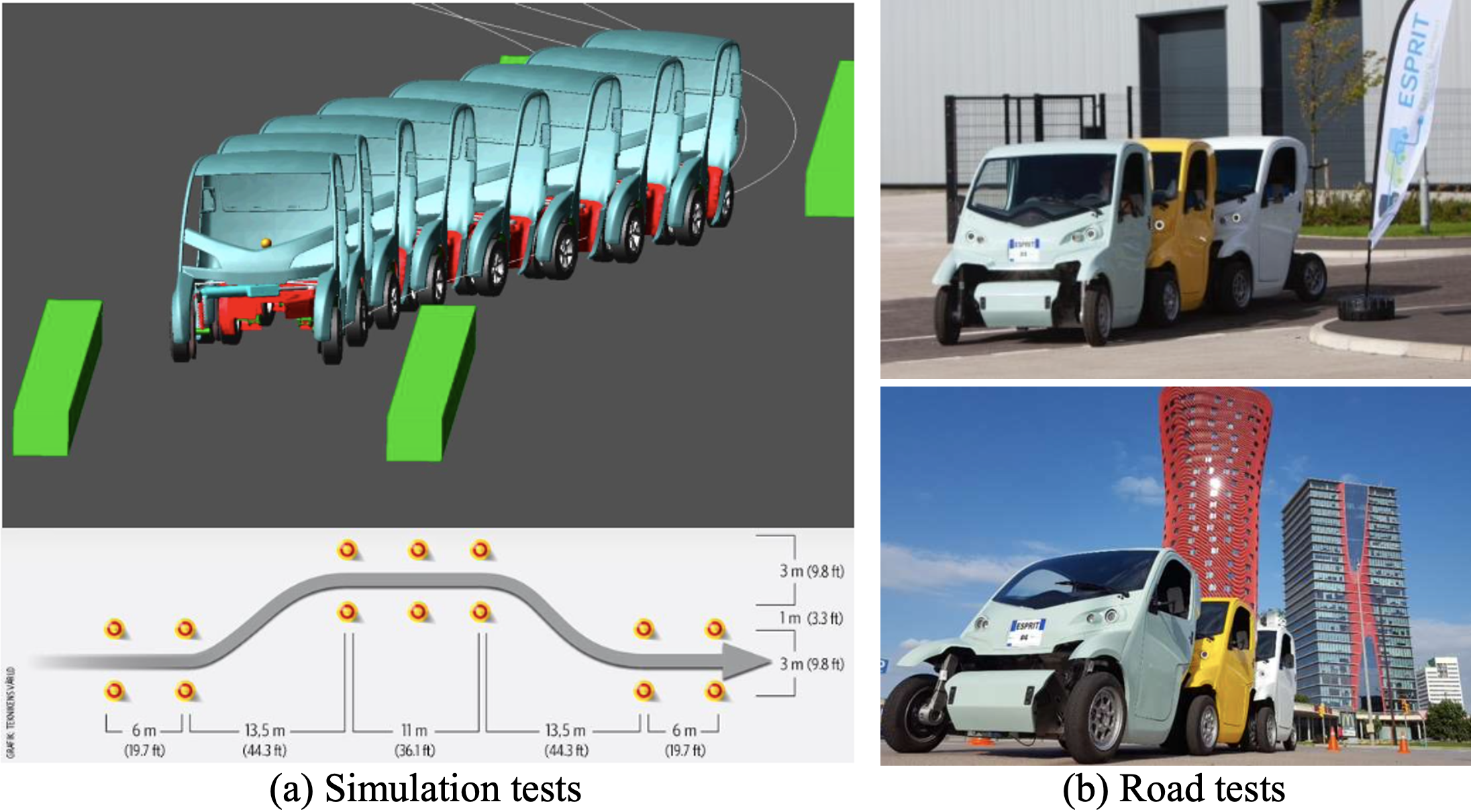}
	\caption{Stability tests of operating platoons (at up to 30 km/h) \citep{gobillot2018esprit}.}
	\label{fig:placeholder}
\end{figure}

The economic feasibility of FBT is primarily derived from the low cost of trailers and the platooning operations: 
\begin{itemize}
	\item Trailers are designed with small batteries (e.g., $\leq$50 km range) and low-power motors to function specially as short-range feeders on local roads, rather than as all-purpose vehicles across entire networks, which typically require battery capacities exceeding 300 km and freeway-capable motors. This minimalist design significantly lowers capital costs and enhances the scalability of FBT system. Preliminary estimates in Appendix \ref{appen_trailer_cost} indicate that the cost of six FBT trailers is approximately equivalent to that of a single conventional electric taxi or four commercially available mini-EVs. This cost advantage enables the deployment of six times as many trailers as ride-hailing taxis, thereby decreasing the pickup time.
	\item The leader-led platoons of physically coupled trailers achieve Economies of Scale (EoS) through shared use of the leader's large battery and high-performance motor, which enables in-motion charging and high-speed line-haul transportation. Greater EoS arises in FBT systems as demand increases: (1) shorter headways, which decrease average waiting times; (2) closer station spacings, which lower average access times; and (3) improved network connectivity, which facilitates easier transfers. 
	The following sections further present theoretical justification and numerical evidence for the EoS achieved by FBT.
\end{itemize}

The above TEA indicates that FBT combines technological feasibility and cost efficiency to enable scalable and sustainable deployment. 
Building on these promising findings, the next section presents a rigorous, quantitative evaluation of FBT performance.

\section{Proof of concept} \label{sec_proof_of_concept}
For proof of concept, this section develops a corridor model to evaluate the performance of optimized FBT systems. Corridors are fundamental building blocks of urban transit networks and play a central role in large cities such as Hong Kong and Beijing. Specifically, we consider a ring-shaped corridor\footnote{Ring-shaped corridors are common in cities with ring–radial network structures (e.g., Chengdu). Other corridor configurations can be represented using the ring-corridor model by treating each operational direction as one half of the ring.} with length $L$ km and width $W$ km, as shown in Fig. \ref{fig_ring}, subject to a uniform demand density of $\lambda$ trips/km$^2$. Patron trip lengths are assumed to follow a uniform distribution, $\ell \sim \mathrm{U}[a, b]$, where $0 \leq a \leq b \leq L/2$. 

This stylized setting is designed to generate general insights to overarching questions, such as whether and to what extent FBT exhibits EoS, how FBT compares to alternative systems, and the magnitude of its potential benefits, rather than to guide specific applications. For practical scenarios, models incorporating realistic (non-uniform) demand distributions and heterogeneous corridor and network designs are under development in our ongoing work, which is beyond the scope of this paper.
\begin{figure}[!ht]
	\centering
	\includegraphics[width=0.9\linewidth]{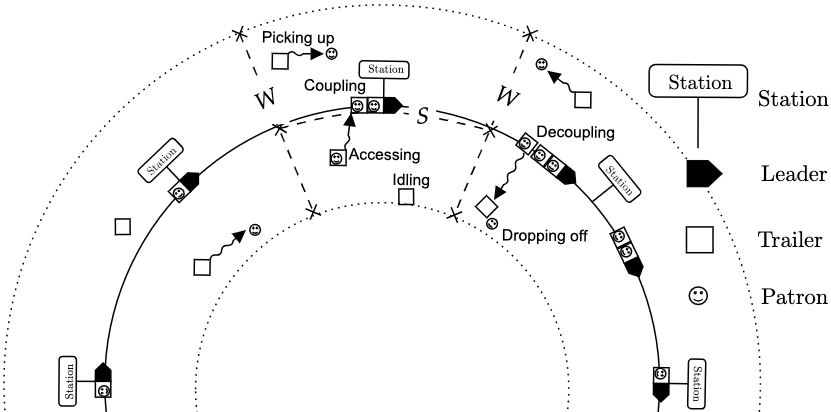}
	\caption{Fly-by transit in a ring corridor (depicting half).}
	\label{fig_ring}
\end{figure}

The corridor design problem involves determining several key variables that characterize FBT systems. These include station spacing $S$ (km/station), service headway $H$ (hours/dispatch), platoon size $J$ (trailers per platoon), trailer fleet size $F_{\mathrm{t}}$ (i.e., the number of trailers), and leader fleet size $F_{\mathrm{l}}$ (i.e., the number of leaders). Notably, some of these decision variables are interdependent, as demonstrated in the subsequent analysis. Consequently, the number of independent decision variables is reduced.

Table \ref{tab_notations} summarizes the notations used in this section.
\begin{table}[!ht]
	\centering
	\caption{Notations}
	\label{tab_notations}
	\resizebox{\columnwidth}{!}{%
		\begin{tabular}{lll|lll}
			\hline
			Variable &
			Description &
			\begin{tabular}[c]{@{}l@{}}Baseline value \\ (unit)\end{tabular} &
			Variable &
			Description &
			\begin{tabular}[c]{@{}l@{}}Baseline value\\ (unit)\end{tabular} \\ \hline
			$a, b$ &
			Demand parameters &
			$0, L/2$ (km) &
			\begin{tabular}[c]{@{}l@{}}$t_c, t_e, t_p$\\ $t_a, t_w$\end{tabular} &
			\begin{tabular}[c]{@{}l@{}}Average time trailers spent \\ in each state\end{tabular} &
			/ (hr/trailer) \\
			$f_i$ &
			Density of idle trailers &
			/ (per km$^2$) &
			$U_o$ &
			Operator cost &
			/ (\$/trip) \\
			$F_{\mathrm{t}}, F_{\mathrm{l}}$ &
			Fleet sizes of trailers, leaders &
			/ &
			$U_p$ &
			Patron cost &
			/ (hr/trip) \\
			$H$ &
			Headway &
			/ (hr) &
			$v, V$ &
			Speed of individual trailers, leader &
			20, 50 (km/hr) \\
			$J, J_{\mathrm{max}}$ &
			Platoon size, its maximum value &
			/, 10 (trailers) &
			$Z$ &
			System cost &
			/ (hr/trip) \\
			$k$ &
			Network-related parameter &
			0.63 &
			$\alpha$ &
			Patrons' value of time &
			\{5, 25\} (\$/hr) \\
			$K_{\mathrm{t}}, K_{\mathrm{l}} $ &
			Vehicle distance traveled &
			/ (veh-km/hr) &
			$\lambda$ &
			Demand density &
			1--300 (trips/km$^2$/hr) \\
			$\ell_i$ &
			Trip length of trailer $i$ &
			$[a, b]$(km) &
			$\pi_{\mathrm{tf}}, \pi_{\mathrm{lf}}$ &
			Unit capital cost of trailers, leaders &
			\begin{tabular}[c]{@{}l@{}}0.083 (\$/trailer/hr)\\ 10.4 (\$/leader/hr)\end{tabular} \\
			$L, W$ &
			Corridor length, width &
			20, 5 (km) &
			$\pi_{\mathrm{tk}}, \pi_{\mathrm{lk}}$ &
			Unit operational cost of trailers, leaders &
			\begin{tabular}[c]{@{}l@{}}0.038 (\$/trailer/km)\\ 0.3 (\$/leader/km)\end{tabular} \\
			\begin{tabular}[c]{@{}l@{}}$n_c, n_e, n_i $\\ $n_p, n_a, n_w$\end{tabular} &
			Number of trailers in six states &
			/ (trailers) &
			$\pi_S$ &
			Unit operation cost of stations &
			7.7 (\$/station/hr) \\
			$S$ &
			Station spacing &
			/ (km) &
			\begin{tabular}[c]{@{}l@{}}$\mu_c, \mu_e, \mu_i$ \\ $\mu_p, \mu_a, \mu_w$\end{tabular} &
			Change rates of trailers in five states &
			/ (trailers/hr) \\ \hline
		\end{tabular}%
	}
\end{table}

\subsection{Steady-state equations}
The optimal design of FBT concerns systems in steady states, as illustrated in Fig. \ref{fig_steady_state}. In this context, trailers transition through six distinct operational states: 
\begin{enumerate}[label=(\roman*)]
	\item Cruising (denoted by subscript $c$): The trailer travels as part of a platoon until it decouples for egress.
	\item Egressing ($e$): As the trailer approaches its destination, it decouples from the main convoy and delivers the patron to their destination.
	\item Idling ($i$): After the patron alights, the trailer enters an idle state while awaiting a new assignment.
	\item Picking up ($p$): Upon receiving a matched request, the trailer proceeds to pick up the patron.
	\item Accessing ($a$): After the patron boards, the trailer heads to the nearest station.  
	\item Waiting ($w$): The trailer, with the patron on board, arrives at the station and waits for coupling and dispatch in a platoon. 
\end{enumerate} 
\begin{figure}[!ht]
	\centering
	\includegraphics[width=0.75\linewidth]{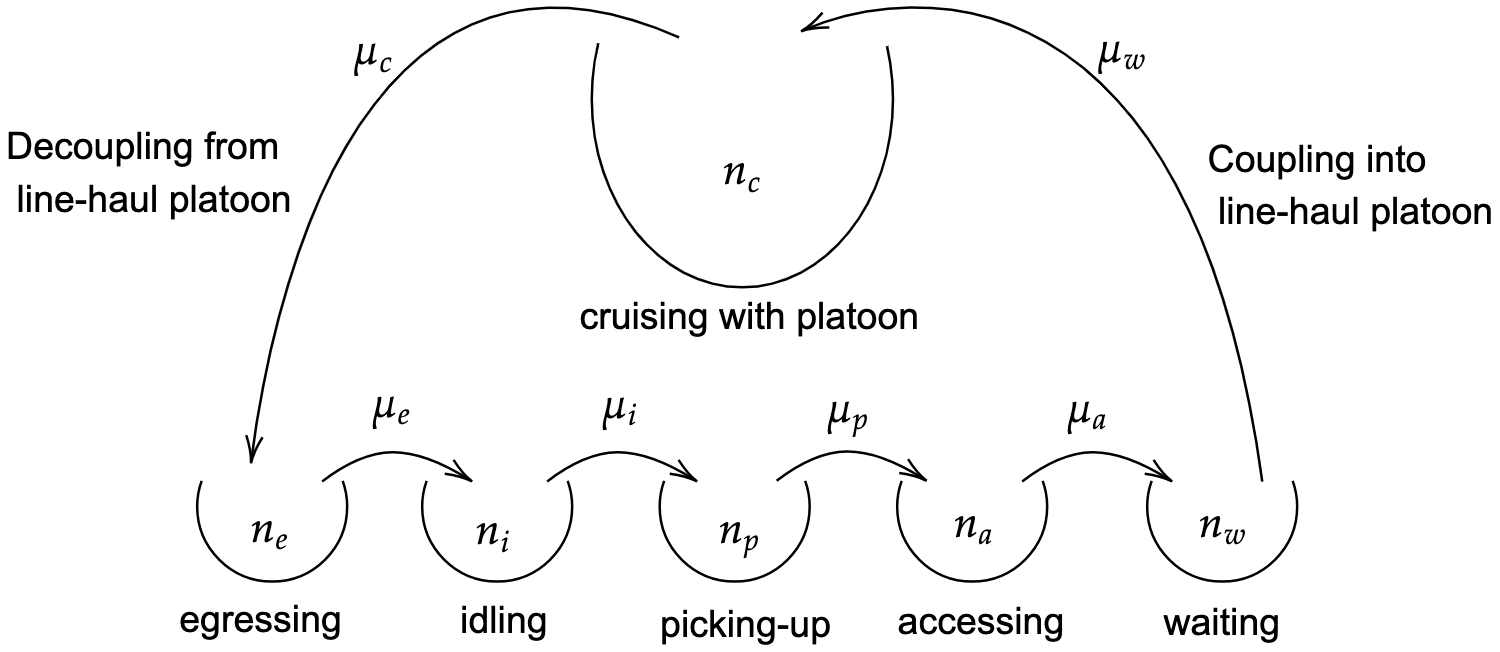}
	\caption{Transition of trailers in steady states.}
	\label{fig_steady_state}
\end{figure}

These states collectively characterize the operational cycle of trailers within the FBT system. Since the system is isotropic, our analysis unit can be the coverage of each station, i.e., $ S\times W$, in which the number of trailers in each state is thus represented by $n_{c}, n_e, n_i, n_p, n_a$, and $n_w$. These trailers change their states at rates of $\mu_c, \mu_e, \mu_i, \mu_p, \mu_a$, and $\mu_w$. 

When the system operates at steady states without oversaturation, as shown in Fig. \ref{fig_steady_state}, we have the following equations :
\begin{align}
	& \mu_c = \mu_e = \mu_i = \mu_p = \mu_a = \mu_w , \label{eq_steady} \\
	& \mu_i = \lambda SW, \label{eq_begining} \\
	& \mu_e = \lambda SW, \label{eq_ending} \\
	& \mu_w = \frac{J}{H}, \label{eq_dispatching}
\end{align}
where equation (\ref{eq_steady}) indicates steady-state conditions, equations (\ref{eq_begining}, \ref{eq_ending}) connect trailers' transitions with demand generating and accomplishing processes; and equation (\ref{eq_dispatching}) means the dispatching flow per unit time.

According to Little's law in queueing theory, the following relationships hold:
\begin{align} \label{eq_little_law}
	& \mu_c = \frac{n_c}{t_c}, &
	& \mu_e = \frac{n_e}{t_e}, &
	& \mu_p = \frac{n_p}{t_p}, &
	& \mu_a = \frac{n_a}{t_a}, &
	& \mu_w = \frac{n_w}{t_w}, &
\end{align}
where $t_c, t_e, t_p, t_a$, and $t_w$ are the average time trailers spent in respective states, and they can be estimated as follows. 

The $t_c$ is the average travel time with the platoon, and can therefore be calculated by dividing the average trip distance $\frac{a+b}{2}$ by the cruising speed $V$:
\begin{align} \label{eq_tc}
	t_c = \frac{a+b}{2V}.
\end{align}

Since the trailer can decouple from the convoy at a point that is closest to her destination, the average travel distance of a decoupling trailer is $W/4$, leading to 
\begin{align}
	t_e = \frac{W}{4v}, \label{eq_te}
\end{align}
where $v$ is the average speed of trailers on local roads. 

The $t_p$ is the average travel time of a matched trailer to her pick-up request, which, according to \cite{daganzo2019public}, can be gauged by 
\begin{align}
	t_p = \frac{k}{v}\sqrt{\frac{SW}{n_i}} \label{eq_tp}
\end{align}
where $k$ is a dimensionless value reflecting network structure (e.g., $k=0.63$ in grid networks). The term $k\sqrt{\frac{SW}{n_i}}$ represents the expected distance of a random request to the closest idle vehicle in a pool of $n_i$ trailers.

The $t_a$ is the average travel time of a trailer traveling from the pick-up location to the closest station, which can be estimated by 
\begin{align}
	t_a = \frac{S+W}{4v}. \label{eq_ta}
\end{align}

The $t_w$ is the average waiting time at the station to be coupled into platoons, which is being dispatched every $H$ minutes; thus, we have
\begin{align}
	t_w = H/2. \label{eq_tw}
\end{align}

\subsection{Operator cost}
The operator cost, $U_o$, can be expressed by the average cost per trip as follows
\begin{align}
	U_o = & \frac{1}{\lambda LW} \left[ \pi_S \frac{L}{S} + \left(\pi_{\mathrm{tf}} F_{\mathrm{t}} + \pi_{\mathrm{tk}} K_{\mathrm{t}}\right) + \left(\pi_{\mathrm{lf}} F_{\mathrm{l}} + \pi_{\mathrm{lk}} K_{\mathrm{l}} \right) \right], \label{eq_Uo}
\end{align}
where $\pi_S$ denotes the amortized unit cost per operating hour for station construction and maintenance, and the term $\frac{L}{S}$ gives the total number of stations along the corridor. The second and third terms in brackets represent the operational costs of trailers and leaders, each comprising two components: capital cost and energy consumption cost. 

Specifically, $\pi_{\mathrm{tf}}$ and $\pi_{\mathrm{tk}}$ are the unit costs for trailers, corresponding to the amortized purchase cost of the fleet $F_{\mathrm{t}}$ and energy consumption over the vehicle distance $K_{\mathrm{t}}$. Similarly, $\pi_{\mathrm{lf}}$ and $\pi_{\mathrm{lk}}$ are the unit costs for leaders, reflecting the amortized purchase cost of the fleet $F_{\mathrm{l}}$ and energy consumption over the vehicle distance $K_{\mathrm{l}}$. The formulations for $F_{\mathrm{t}}$, $K_{\mathrm{t}}$, $F_{\mathrm{l}}$, and $K_{\mathrm{l}}$ are provided below.

\subsubsection{Trailer fleet $F_{\mathrm{t}}$ and vehicle distance $K_{\mathrm{t}}$}
Substituting $t_c, t_e, t_p, t_a, t_w$ into (\ref{eq_little_law}) and combining with (\ref{eq_steady}--\ref{eq_dispatching}) yields
\begin{align}
	& n_c = \lambda SW  \frac{a+b}{2V}, \\
	& n_e =  \lambda S W \frac{W}{4v}, \\
	& n_p =  \lambda S W \frac{k}{v}\sqrt{\frac{SW}{n_i}}, \\
	& n_a = \lambda S W \frac{S+W}{4v}, \\
	& n_w = \lambda S W \frac{H}{2} = \frac{J}{2}. 
\end{align}

If we define the density of idle trailers as $ f_i = \frac{n_i}{SW}$ and replace $H$ with $\frac{J}{\lambda SW}$, we can now express the total fleet size of individual trailers by
\begin{align} \label{eq_Ft}
	F_{\mathrm{t}}(S, J, f_i) = &\frac{L}{S} \left( n_c + n_i + n_e + n_p + n_a + n_w \right) \notag \\
	= & \lambda LW \left ( \frac{a+b}{2V} + \frac{f_i}{\lambda} + \frac{k}{v}\sqrt{\frac{1}{f_i}} + \frac{S+2W}{4v} + \frac{J}{2\lambda S W} \right ).
\end{align}

We next formulate $K_{\mathrm{t}}$ as Eq. (\ref{eq_Kt}) by considering only the vehicle distance traveled by individually operating trailers, as they are responsible for their own energy consumption. The distance traveled while coupled in platoons is excluded, since propulsion in these segments is provided by the leaders. 
\begin{align} \label{eq_Kt}
	K_{\mathrm{t}} (S,f_i) = \lambda LW \left( k\sqrt{\frac{1}{f_i}} + \frac{S+2W}{4} \right),
\end{align}
where $\lambda L W $ yields the total number of trailers per operational hour, and the terms in parentheses represent the distances traveled by an average trailer, with the second term combining the travel distances in accessing and egressing states. Note that the idling distance is ignored.

\subsubsection{Leader fleet $F_{\mathrm{l}}$ and vehicle distance $K_{\mathrm{l}}$}
For the leaders, we estimate the fleet size by 
\begin{align} \label{eq_Fl}
	F_{\mathrm{l}}(S, J) = & \frac{L}{S} \left(\frac{1}{H} \frac{\mathrm{E} \left( \max_{j=1,...,J} \left\{ \ell_j\right\} \right) }{V} + 1\right) \notag \\
	= & \frac{\lambda LW}{J} \frac{\mathrm{E} \left( \max_{j=1,...,J} \left\{ \ell_j\right\} \right) }{V} + \frac{L}{S},
\end{align}
where the terms in parentheses denote the number of leaders per station. The first term in parentheses represents the number of leaders required per cycle. The average cycle time for each leader is $\frac{\mathrm{E} \left( \max_{j=1,..., J} \left\{ \ell_j\right\} \right) }{V}$, where $V$ is the traveling speed of the leader/platoon and $\mathrm{E} \left( \max_{j=1,..., J} \left\{ \ell_j\right\}\right)$ returns the expected travel distance of a dispatched leader before returning to station for another dispatch, which is the maximum trip distance of all trailers coupled with the leader in the platoon. The constant 1 in parentheses accounts for one standby leader at each station to ensure continuous availability. 

Under the assumption of uniform trip length distribution, we have $\mathrm{E} \left( \max_{j=1,..., J} \left\{ \ell_j\right\}\right) = \frac{a+bJ}{1+J}$, of which the derivation is given in Appendix \ref{appen_leader_distance}. Therefore, the leaders' fleet size can be rewritten as
\begin{align} \label{eq_Fl}
	F_{\mathrm{l}}(S, J) = \frac{ \lambda LW(a+bJ) }{VJ(1+J)} + \frac{L}{S}. 
\end{align}

Leaders' vehicle distance $K_{\mathrm{l}}$ can also be obtained as
\begin{align} \label{eq_Kl}
	K_{\mathrm{l}}(J) = \frac{L}{S} \frac{1}{H} \frac{a + bJ}{1 + J} = \frac{\lambda LW}{J} \frac{a + bJ}{1 + J}.
\end{align}

Substituting (\ref{eq_Ft}), (\ref{eq_Kt}), (\ref{eq_Fl}), and (\ref{eq_Kl}) into (\ref{eq_Uo})\footnote{Note that the leader’s energy consumption should vary over the platoon’s trip as trailers progressively decouple. For analytical tractability, we approximate this with an average unit cost per distance $\pi_{\mathrm{lk}}$. }, the operator cost $U_o$ becomes
\begin{align}
	U_o (S, J, f_i) =  & \frac{2\left(\pi_{S}+\pi_{\mathrm{lf}}\right)+\pi_{\mathrm{tf}}J}{2\lambda SW}+\pi_{\mathrm{tf}}\left(\frac{a+b}{2V}+\frac{f_{i}}{\lambda}\right)+\left(\frac{\pi_{\mathrm{tf}}}{v}+\pi_{\mathrm{tk}}\right)\left(k\sqrt{\frac{1}{f_{i}}}+\frac{S+2W}{4}\right) \notag \\
	& +\frac{\pi_{\mathrm{lk}}}{J}\frac{a+bJ}{1+J}\left(\frac{1}{V}+1\right).
\end{align}

\subsection{Patron cost}
An average patron's cost $U_p$ is composed of five components dependent on five trip segments of the entire trip: (i) at-home waiting time before being picked up, which equals $t_p$ in (\ref{eq_tp}), (ii) access time to the closest station, which is equivalent to $t_a$ in (\ref{eq_ta}), (iii) at-station in-vehicle waiting time, i.e., $t_w$ in (\ref{eq_tw}), (iv) in-platoon line-haul time, i.e., $t_c$ in (\ref{eq_tc}), and (v) egress time for decoupling from the platoon to her destination, i.e., $t_e$ in (\ref{eq_te}).

Therefore, we have 
\begin{align} \label{eq_Up}
	U_p (S,J,f_i) = \frac{k}{v} \sqrt{\frac{1}{f_i}} + \frac{S+2W}{4v} + \frac{J}{2\lambda SW} + \frac{a+b}{2V}.
\end{align}

\subsection{Optimal design problem}
With the above results, we construct the following optimal design model to minimize the generalized average system cost $Z$ as a weighted sum of operator and patron costs.
\begin{subequations} \label{eq_optimal_design}
	\begin{align} \label{eq_obj}
		\minimize_{S,J,f_i} Z(S,J,f_i) = \frac{1}{\alpha}U_o(S,J,f_i) + U_p(S,J,f_i),
	\end{align}
	subject to,
	\begin{align}
		S, f_i > 0, J \leq J_{\mathrm{max}},J \in \mathbb{N}^+,
	\end{align}
\end{subequations}
where $\alpha$ represents patrons' average value of time and is used to convert monetary costs into time units; and $J_{\mathrm{max}}$ denotes the maximum number of trailers permitted to be coupled in a single platoon, which is associated with station capacity or mobility requirements imposed by road curvature or the propulsion capacity of the leader vehicle.

To solve (\ref{eq_optimal_design}), we notice that if we treat $J$ as given, the function $Z(S,f_i|J)$ is convex concerning $S$ and $f_i$, which can be verified to see $\frac{\partial^2 Z(S,f_i|J)}{\partial f_i^2}>0$ and $\frac{\partial^2 Z(S,f_i|J)}{\partial S^2}>0$. Therefore, we can use the first-order conditions for minimizing $Z(S,f_i|J)$:
\begin{subequations}
	\begin{align}
		\frac{\partial Z(S,f_i|J)}{\partial f_i} & = \frac{\pi_{\mathrm{tf}}}{\alpha\lambda}-\left(\frac{\pi_{\mathrm{tf}}+\alpha}{v\alpha}+\frac{\pi_{\mathrm{tk}}}{\alpha}\right)\frac{k}{2}\left(f_{i}\right)^{-\frac{3}{2}} = 0, \\
		\frac{\partial Z(S,f_i|J)}{\partial S} & =\frac{\pi _\mathrm{tf}+\pi _\mathrm{tk}v+\alpha}{4v\alpha}-\frac{(\pi _\mathrm{tf}+\alpha )J+2(\pi _\mathrm{lf}+\pi_S)}{2\alpha\lambda WS^{2}}=0,
	\end{align}
\end{subequations}
to obtain the following closed-form solutions conditional on the given $J$:
\begin{subequations}
	\begin{align}
		f_{i}^{*} & =\left[\frac{k\lambda\left(\pi_{\mathrm{tf}}+\pi_{\mathrm{tk}}v+\alpha\right)}{2\pi_{\mathrm{tf}}v}\right]^{\frac{2}{3}}, \label{eq_f_i*} \\
		S^{*}(J) & =\left[\frac{2v \left(( \pi _\mathrm{tf} +\alpha)J +2(\pi _\mathrm{lf} +\pi _{S}) \right) }{\mathnormal{\lambda W}\left( \pi _\mathrm{tf} +\pi _\mathrm{tk} v+\alpha \right)}\right]^\frac{1}{2}. \label{eq_S*}
	\end{align}
\end{subequations}

With (\ref{eq_f_i*}, \ref{eq_S*}) in hands, the optimal $J^*$ can be efficiently found via a line search:
\begin{align}
	J^* = {\arg\min}_{J = 1, 2, ..., J_{\mathrm{max}}} Z(S^*(J),f^*_i,J).
\end{align}

\begin{proposition} \label{prop_EoS}
	When $J^*$ is bounded at the boundary $J_{\mathrm{max}}$, the optimized FBT systems exhibit EoS proportional to $\lambda^{-\frac{1}{2}}$. 
\end{proposition}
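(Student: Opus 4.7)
The plan is to substitute the closed-form optima from (\ref{eq_f_i*})--(\ref{eq_S*}) with $J = J_{\mathrm{max}}$ directly into the objective (\ref{eq_obj}) and read off the dominant $\lambda$-scaling by grouping terms. The key structural observation is that (\ref{eq_S*}) gives $S^{*}(J_{\mathrm{max}}) = A\,\lambda^{-1/2}$ with $A$ a function of the cost and demand parameters only, while (\ref{eq_f_i*}) gives $f_i^{*} = B\,\lambda^{2/3}$. These two scalings dictate every $\lambda$-dependent piece of $Z^{*}$.

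First, I would substitute $S^{*} \propto \lambda^{-1/2}$ everywhere $S$ appears in $Z$. Three families of terms then collapse into $\lambda^{-1/2}$ scaling: (i) the station-and-leader amortization piece $\frac{2(\pi_S+\pi_{\mathrm{lf}})+\pi_{\mathrm{tf}}J_{\mathrm{max}}}{2\alpha\lambda S W}$, since $1/(\lambda S) \propto \lambda^{-1/2}$; (ii) the at-station in-vehicle waiting piece $\frac{J_{\mathrm{max}}}{2\lambda S W}$, by the same reasoning; and (iii) the local-road terms proportional to $S$ itself, namely the operator component $\big(\frac{\pi_{\mathrm{tf}}}{v\alpha}+\frac{\pi_{\mathrm{tk}}}{\alpha}\big)S/4$ and the patron access time $S/(4v)$. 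Noting that the implied headway $H^{*} = J_{\mathrm{max}}/(\lambda S^{*} W) \propto \lambda^{-1/2}$, this yields the classical transit EoS pattern in which station spacing and headway shrink jointly at rate $\lambda^{-1/2}$; summing the three families produces $C\,\lambda^{-1/2}$ with a closed-form constant $C$ obtained by routine algebraic simplification.

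Second, I would address the remaining $\lambda$-dependent contributions arising from $f_i^{*}$, namely $\frac{\pi_{\mathrm{tf}}f_i^{*}}{\alpha\lambda}$ and $\big(\frac{\pi_{\mathrm{tf}}+\pi_{\mathrm{tk}}v+\alpha}{v\alpha}\big)k\sqrt{1/f_i^{*}}$. Both collapse into $\lambda^{-1/3}$ scaling and represent the baseline on-demand matching EoS inherited from the taxi-like feeder operation; since they are independent of $J$ and $S$, they reflect the taxi-like background rather than the FBT-specific transit EoS. The proposition's $\lambda^{-1/2}$ rate refers to the platoon- and station-driven portion of the savings, i.e., the portion activated precisely by the binding constraint $J^{*} = J_{\mathrm{max}}$.

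The main obstacle will be bookkeeping rather than conceptual: tracking which terms absorb into which $\lambda$-class after substitution, and verifying that no cross-dependence of $S^{*}$ on $J_{\mathrm{max}}$ modifies the exponent. A complementary step is to check that the boundary hypothesis is self-consistent, by verifying $\partial Z/\partial J\big|_{J=J_{\mathrm{max}}} < 0$ at $(S^{*},f_i^{*})$ for sufficiently large $\lambda$; this shows that the $\lambda^{-1/2}$ rate is genuinely the regime-specific behavior implied by the boundary hypothesis, and that without the cap on $J$ a different joint-scaling of $(S^{*},J^{*})$ would govern the EoS.
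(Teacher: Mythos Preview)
Your proposal is correct and follows essentially the same route as the paper: substitute $f_i^{*}$ and $S^{*}(J_{\mathrm{max}})$ into $Z$, group terms by their $\lambda$-exponent, and arrive at $Z^{*} = A\lambda^{-1/2} + B\lambda^{-1/3} + C$. The paper's own proof is precisely this two-line substitution; your additional boundary self-consistency check and your sharper separation of the $\lambda^{-1/3}$ taxi-like background from the $\lambda^{-1/2}$ platoon-and-station piece are refinements the paper does not spell out.
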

\begin{proof}
	When $J^*$ is bounded at the boundary $J_{\mathrm{max}}$, substituting $f_{i}^{*}$ and $S^{*}(J_{\mathrm{max}})$ into $Z$ in (\ref{eq_obj}) yields $Z^*(J_{\mathrm{max}}) = A\lambda^{-\frac{1}{2}}+B\lambda^{-\frac{1}{3}} + C$, where $A, B$, and $C$ are given non-zero constants. Therefore, the leading-order behavior of $Z^*(J_{\mathrm{max}})$ is is governed by the $A\lambda^{-\frac{1}{2}}$ term, i.e., $Z^*(J_{\mathrm{max}}) \propto (\frac{1}{\lambda})^{\frac{1}{2}}$. This result implies that the optimized FBT's average system cost decreases with increasing demand density on the order of $\lambda^{-\frac{1}{2}}$.
\end{proof}
\begin{proposition} \label{prop_Up}
	When $J^*$ is bounded at the boundary $J_{\mathrm{max}}$, FBT offers patrons the average trip time or speed approaching that of taxi service as demand increases. 
\end{proposition}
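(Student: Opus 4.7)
The plan is to substitute the closed-form optima $f_i^*$ in (\ref{eq_f_i*}) and $S^*(J_{\max})$ in (\ref{eq_S*}) into the patron-cost expression (\ref{eq_Up}), take $\lambda\to\infty$, and compare the resulting limit with a taxi benchmark. From (\ref{eq_f_i*}), $f_i^*\propto\lambda^{2/3}$, so the pick-up term $(k/v)\sqrt{1/f_i^*}$ decays as $\lambda^{-1/3}$; from (\ref{eq_S*}) with $J=J_{\max}$, $S^*(J_{\max})\propto\lambda^{-1/2}$, so both the longitudinal portion $S^*/(4v)$ of the access time and the at-station wait $J_{\max}/(2\lambda S^* W)$ decay as $\lambda^{-1/2}$. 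The two remaining components, the transverse access-plus-egress time $W/(2v)$ and the line-haul time $(a+b)/(2V)$, are $\lambda$-independent.

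Collecting these terms, I would write $U_p^*(J_{\max})=A\lambda^{-1/3}+B\lambda^{-1/2}+\tfrac{W}{2v}+\tfrac{a+b}{2V}$ with positive constants $A,B$, in direct parallel with the decomposition used in Proposition \ref{prop_EoS}. In the heavy-demand limit, all $\lambda$-dependent penalties vanish, leaving only the irreducible last-meter lateral travel at local speed $v$ and the trunk-line ride at platoon speed $V$. For an analogously optimized e-hailing taxi sharing the same road hierarchy, the patron's trip time is the sum of a vanishing pick-up term and a direct door-to-door ride whose longitudinal component $(a+b)/(2V)$ is served on trunk and whose lateral component $W/(2v)$ is served on local streets; these two coincide with the FBT limit, so $U_p^{\text{FBT},*}(J_{\max})\to U_p^{\text{taxi}}$, and the same argument applied to the expected door-to-door distance $(a+b)/2+W/2$ yields the corresponding convergence of the average speed.

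The hard part will be pinning down the taxi benchmark, since this section does not formulate it explicitly. I would adopt the natural benchmark in which the taxi shares the road hierarchy and pick-up mechanics of the FBT trailers, so that both services degenerate to the same deterministic O-to-D trip as $\lambda\to\infty$; under any other reasonable choice (for example, constant-speed door-to-door service at $V$), any residual discrepancy is an $O(1)$ constant independent of $\lambda$, so the "approaching" claim remains valid at the level of the leading-order decay in $\lambda$.
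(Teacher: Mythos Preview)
Your proposal is correct and follows essentially the same route as the paper: substitute $f_i^*$ and $S^*(J_{\max})$ into (\ref{eq_Up}), isolate the $\lambda^{-1/3}$ pick-up term, the two $\lambda^{-1/2}$ terms from $S^*/(4v)$ and $J_{\max}/(2\lambda S^* W)$, and the $\lambda$-free constants $W/(2v)+(a+b)/(2V)$, then compare with the taxi benchmark whose $U_p$ likewise reduces to a $\lambda^{-1/3}$ pick-up term plus $W/(2v)+(a+b)/(2V)$. The only cosmetic difference is that the paper writes out the explicit constants and phrases the comparison as ``the remaining terms are comparable to those of the optimized taxi'' (keeping the $\lambda^{-1/3}$ term on both sides), whereas you pass fully to the limit; both readings support the proposition as stated, and your anticipated taxi benchmark matches the one the paper adopts in Appendix~\ref{appen_taxi_bus}.
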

\begin{proof}
	When $J^*$ is bounded at the boundary $J_{\mathrm{max}}$, substituting $f_{i}^{*}$ and $S^{*}(J_{\mathrm{max}})$ into $U_p$ in (\ref{eq_Up}) yileds
	\begin{align*}
		U^*_{p}(J_{\mathrm{max}})= & \left(\frac{k}{v}\right)^{\frac{2}{3}}\left[\frac{2\pi_{\mathrm{tf}}}{\lambda\left(\pi_{\mathrm{tf}}+\pi_{\mathrm{tk}}v+\alpha\right)}\right]^{\frac{1}{3}}+\frac{W}{2v}+\frac{a+b}{2V} \notag \\ 
		& +\frac{1}{2\sqrt{2v}}\left[\frac{\left((\pi_{\mathrm{tf}}+\alpha)J_{\mathrm{max}}+2(\pi_{\mathrm{lf}}+\pi_{S})\right)}{\mathnormal{\lambda W}\left(\pi_{\mathrm{tf}}+\pi_{\mathrm{tk}}v+\alpha\right)}\right]^{\frac{1}{2}} \notag \\
		& +\frac{J_{\mathrm{max}}}{2}\left[\frac{\left(\pi_{\mathrm{tf}}+\pi_{\mathrm{tk}}v+\alpha\right)}{2\lambda Wv\left((\pi_{\mathrm{tf}}+\alpha)J_{\mathrm{max}}+2(\pi_{\mathrm{lf}}+\pi_{S})\right)}\right]^{\frac{1}{2}},
	\end{align*}
	where the last two terms approach zero as demand density $\lambda$ increases; and the magnitudes of the rest terms are comparable to those in optimized taxi services; see Appendix \ref{appen_taxi_bus}.
\end{proof}

\begin{remark}
	Proposition (\ref{prop_EoS}) suggests that FBT has a more substantial EoS than taxis, whose average system cost scales on the order of $\lambda^{-\frac{1}{3}}$ \citep{daganzo2019public}; see also Appendix \ref{appen_taxi_bus}. Proposition (\ref{prop_Up}) indicates that, under sufficient demand, optimized FBT systems can provide rapid door-to-door service comparable to taxis. Given FBT's larger EoS, its operational costs are expected to be significantly lower than those of taxis. 
	
	By contrast, the door-to-door speed in conventional bus systems asymptotically approaches cycling speed as demand increases, even if one assumes zero operating costs and infinitely high service frequency, as demonstrated by \cite{daganzo2019public}; see also Appendix \ref{appen_taxi_bus} for a brief proof.
	
	Consequently, FBT has the potential to bridge the urban mobility gap, being more affordable than taxis and faster than buses. 
\end{remark}
\begin{remark}
	We also notice that the optimal density of idle trailers, $f^*_i$ in (\ref{eq_f_i*}), depends exclusively on predetermined demand density and socio-economic parameters. It is independent of design variables such as platoon size $J$ and station spacing $S$. This finding suggests that $f_i$ can serve as a convenient proxy for monitoring system operations: if $f_i < f^*_i$, additional trailers should be deployed; otherwise, if $f_i > f^*_i$, the system may be oversupplied with trailers. 
\end{remark}


\subsection{Numerical examples}
Using the baseline values in Table \ref{tab_notations}, we run numerical experiments to compare an optimized FBT system with optimized bus and taxi benchmarks. The models for the benchmarks are presented in Appendix \ref{appen_taxi_bus}. 

To ensure fair comparisons, we calibrate the FBT cost parameters relative to the market prices of mini-EVs and electric buses. We assume vehicles in all systems are equipped with comparable autonomous driving technology within the planning horizon, so their cost differences primarily reflect variations in vehicle size, battery capacity, and propulsion system. 

Specifically, we set the unit capital and operational costs of trailers to two third of a typical mini-EV's market price ($\$3000$ per min-EV) (See the reasoning in Appendix \ref{appen_trailer_cost}), i.e., $\pi_{\mathrm{tf}} = 0.083 $ (\$/trailer/hr) (calculated by $\frac{\$2000 \text{/trailer}}{8 \text{ year} \times 300 \text{ days} \times 10 \text{ hrs}}$) and $\pi_{\mathrm{tk}} = 0.038$  (\$/trailer/km). For leaders, the unit capital and operational costs are set to match those of a standard 12 m electric bus\footnote{\url{https://mobilityforesights.com/product/hong-kong-electric-bus-market}}, resulting in $\pi_{\mathrm{lf}} = 10.4 $ (\$/leader/hr) (computed by $\frac{\$250000 \text{/leader}}{8 \text{ year} \times 300 \text{ days} \times 10 \text{ hrs}}$) and $\pi_{\mathrm{lk}} = 0.3$ (\$/leader/km). The unit cost of the FBT station $\pi_{S}=\$7.7$ per station per operation hour is set based on that of the Bus Rapid Transit (BRT) station.

The following analyses are conducted in high-wage ($\alpha = \$25 $/hr) and low-wage ($\alpha = \$5 $/hr) cities with demand densities varying in the range of $ \lambda \in [1, 300]$ trips/km$^2$/hr. Parametric analysis will be conducted to examine different unit cost settings.

\subsubsection{Economies of scale}
Our numerical results confirm that the FBT system achieves an EoS substantially greater than that of taxis and even surpasses that of bus services. Fig. \ref{fig_EoS} illustrates the percentage change in average system cost per trip across varying demand levels, using $\lambda = 1 $ (trips/km$^2$/hr) as the reference. The average system cost per trip for FBT decreases remarkably as demand increases in both high and low-wage cities. In contrast, taxis exhibit only modest EoS, while bus systems exhibit an EoS that falls between those of taxi and FBT systems throughout the demand spectrum. 

This superior EoS in the FBT corridor arises from two key factors: (i) reduced service headway with increasing demand, which shortens average waiting times at stations; and (ii) decreased station spacing as demand rises, resulting in shorter average access distances and times; see Fig. \ref{fig_EoS_explanation}(a) and (b). Although bus transit benefits from these same mechanisms, FBT is distinct in that reduced station spacing does not impede platoon operations or increase in-vehicle time for onboard patrons, thanks to the absence of intermediate stops. In conventional bus systems, however, closer stop spacing lowers walking distances but increases stopping delays for vehicles and onboard patrons, thereby compromising the overall EoS. Note that the EoS in taxi services arises solely from the shared or cycled use of vehicles, a property that is also preserved in FBT systems.
\begin{figure}[!htb]
	\centering
	\subfigure[High-wage cities $\alpha = \$25$/hr]{\includegraphics[width=75mm]{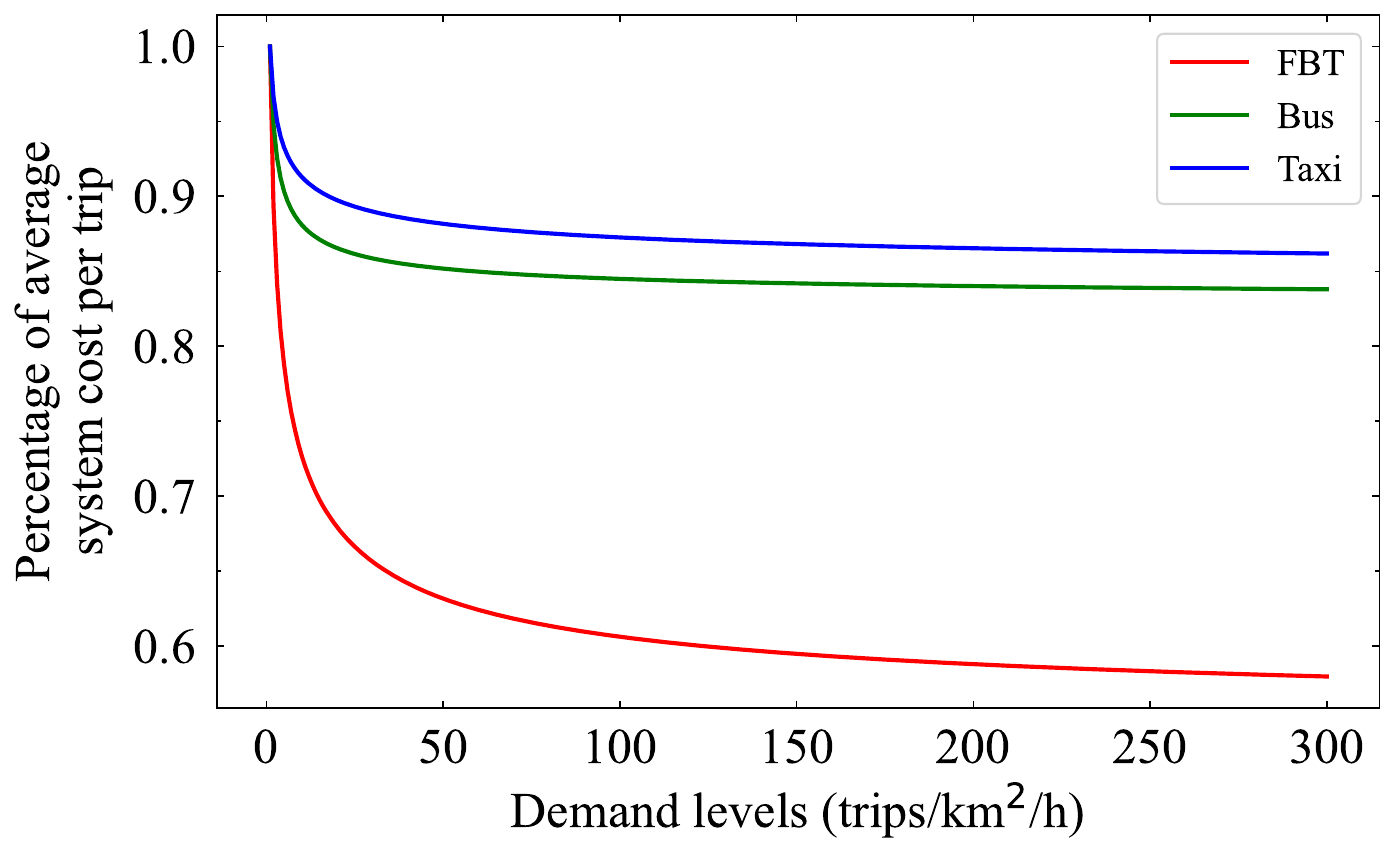}}
	\hfill
	\subfigure[Low-wage cities $\alpha = \$5$/hr]{\includegraphics[width=75mm]{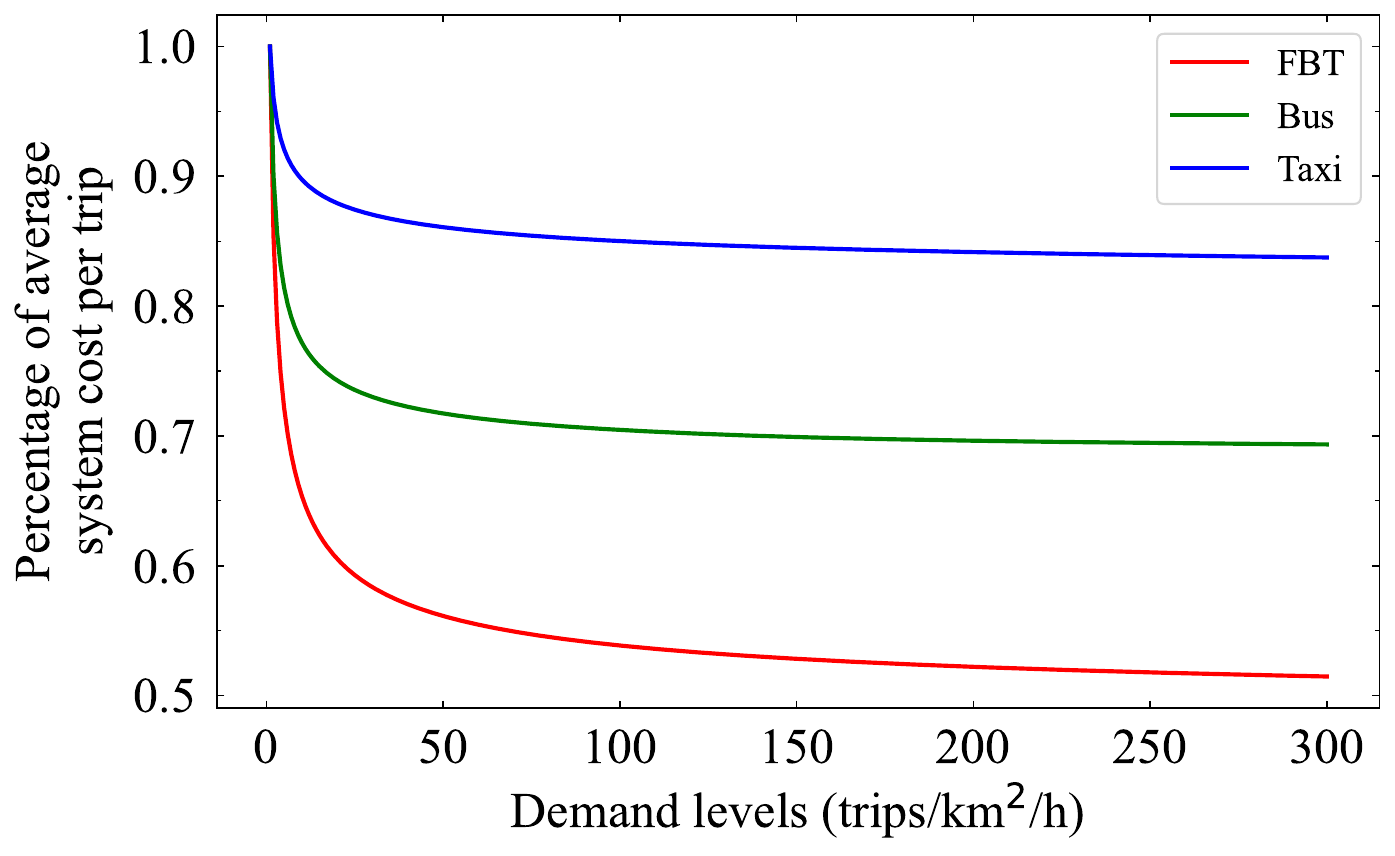}}
	\caption{Economies of scale (EoS) of FBT, taxi, bus systems.}
	\label{fig_EoS}
\end{figure}
\begin{figure}[htb]
	\centering
	\subfigure[High-wage cities $\alpha = \$25$/hr]{\includegraphics[width=75mm]{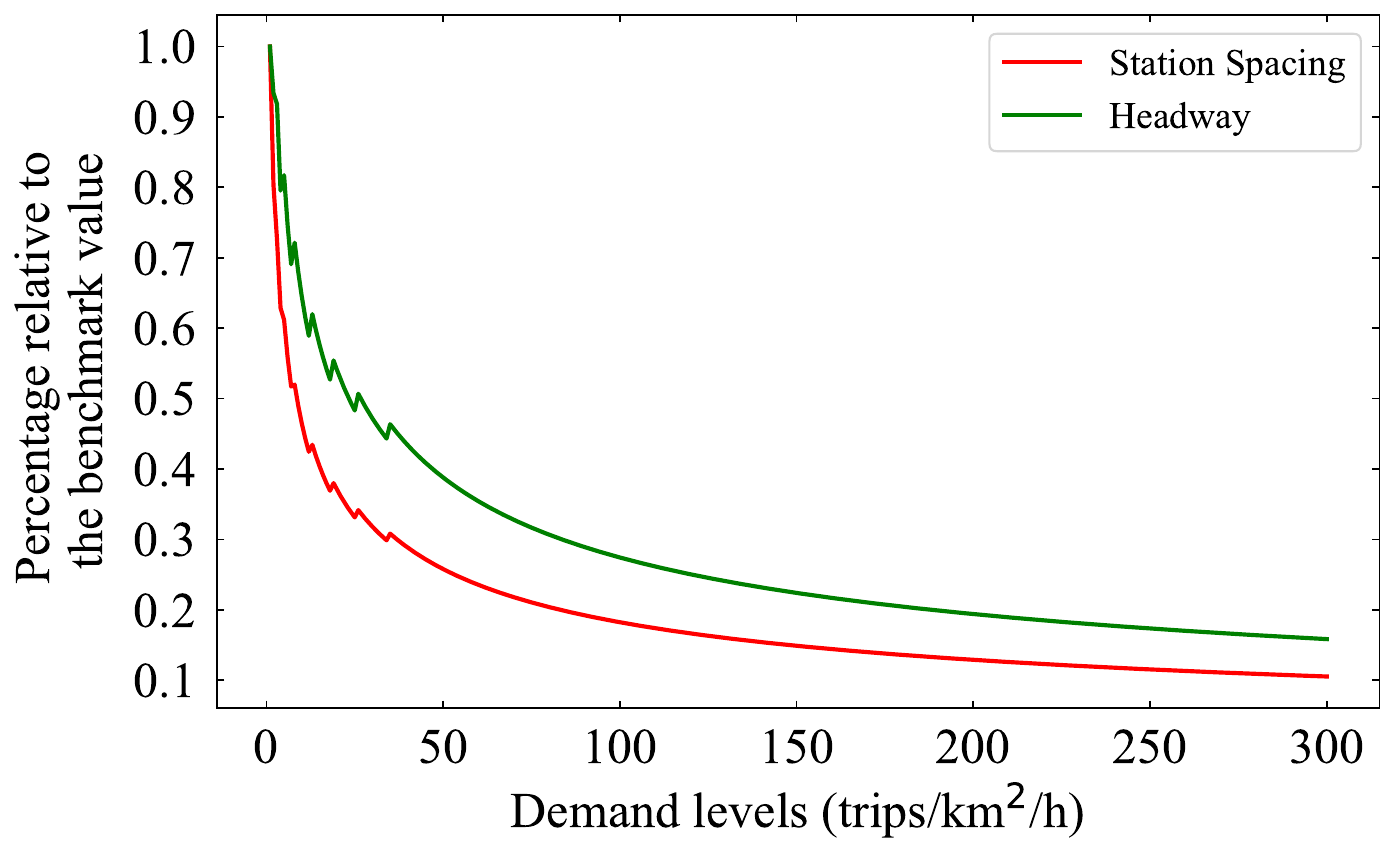}}
	\hfill
	\subfigure[Low-wage cities $\alpha = \$5$/hr]{\includegraphics[width=75mm]{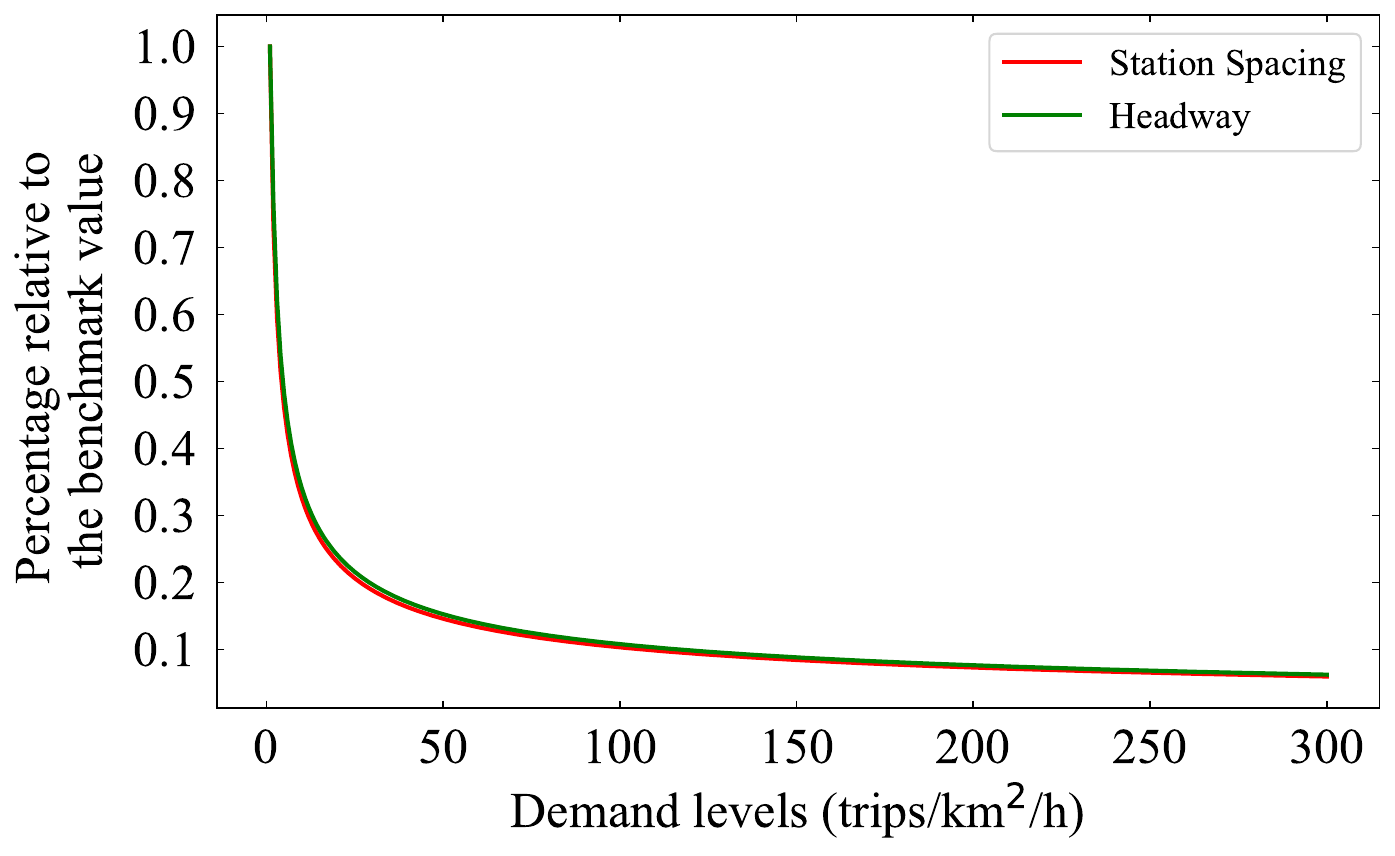}}
	\caption{Two sources of EoS in FBT systems: Decreasing headways and station spacings. (The sawtooth pattern in the curves results from the discrete integer values taken by $J^*$)}
	\label{fig_EoS_explanation}
\end{figure}

\subsubsection{System cost savings}
The average system costs of three optimized systems are depicted in Fig. \ref{fig_system_cost}(a) for high-wage cities and Fig. \ref{fig_system_cost}(b) for low-wage cities. As observed in Fig. \ref{fig_system_cost}(a), taxis prevail for low demand levels ($\lambda \leq 15$ trips/km$^2$/hr), producing the lowest average system cost. As demand levels rise, however, FBT outperforms taxis, while bus transit consistently incurs the highest average system cost. FBT's advantage is enhanced in low-wage cities, as shown in Fig. \ref{fig_system_cost}(b), where FBT triumphs across all demand levels in the tested scenarios.
\begin{figure}[htb]
	\centering
	\subfigure[High-wage cities $\alpha = \$25$/hr]{\includegraphics[width=75mm]{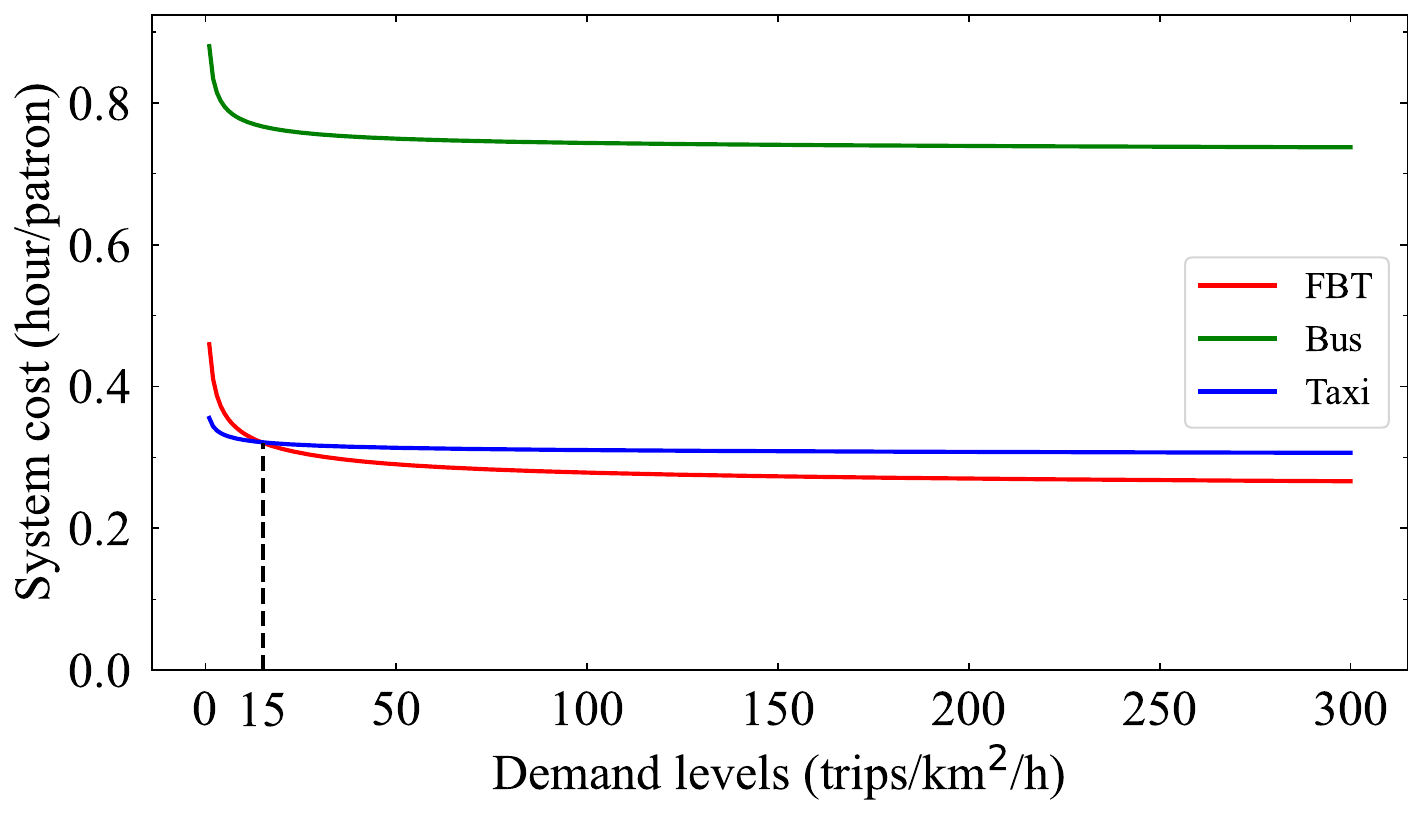}}
	\hfill
	\subfigure[Low-wage cities $\alpha = \$5$/hr]{\includegraphics[width=75mm]{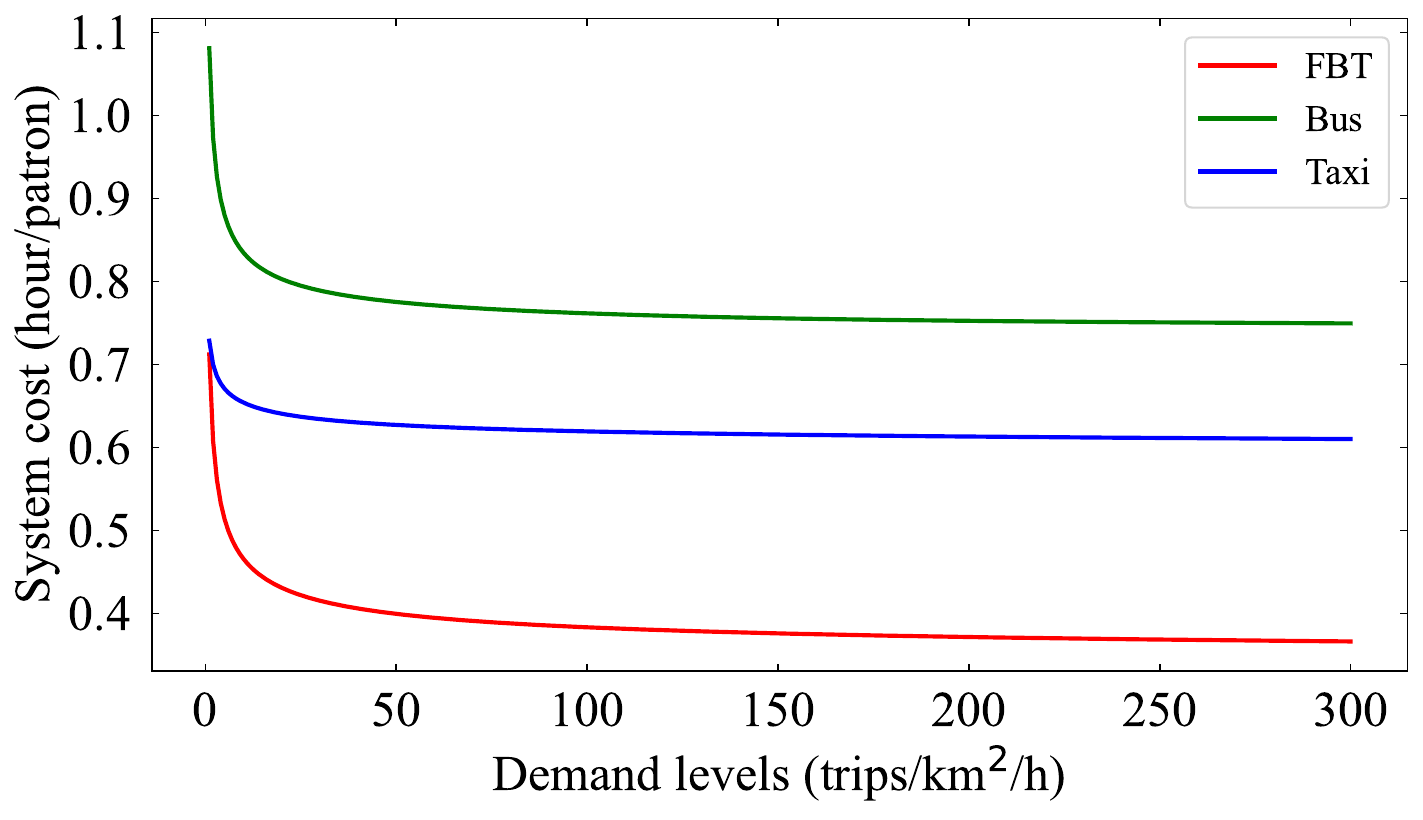}}
	\caption{Changes in average system cost of FBT, taxi, bus systems under optimized designs.}
	\label{fig_system_cost}
\end{figure}

Fig. \ref{fig_system_cost_saving}(a) and (b) present the cost savings of FBT relative to taxis and buses. FBT achieves savings of 13\% and 40\% compared to taxis in high- and low-wage cities, respectively, and 64\% and 51\% compared to buses in high- and low-wage cities, respectively.
\begin{figure}[htb]
	\centering
	\subfigure[High-wage cities $\alpha = \$25$/hr]{\includegraphics[width=75mm]{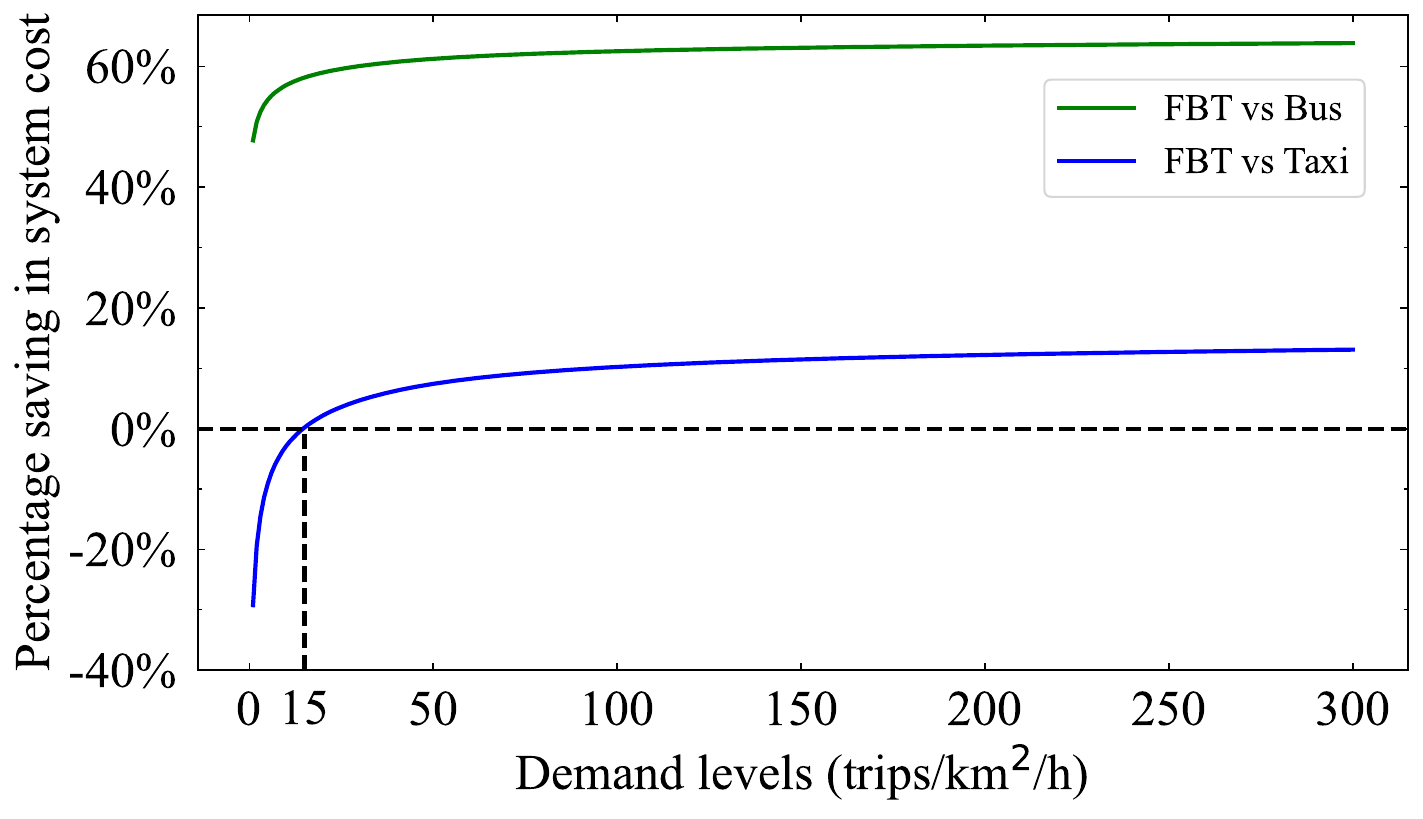}}
	\hfill
	\subfigure[Low-wage cities $\alpha = \$5$/hr]{\includegraphics[width=75mm]{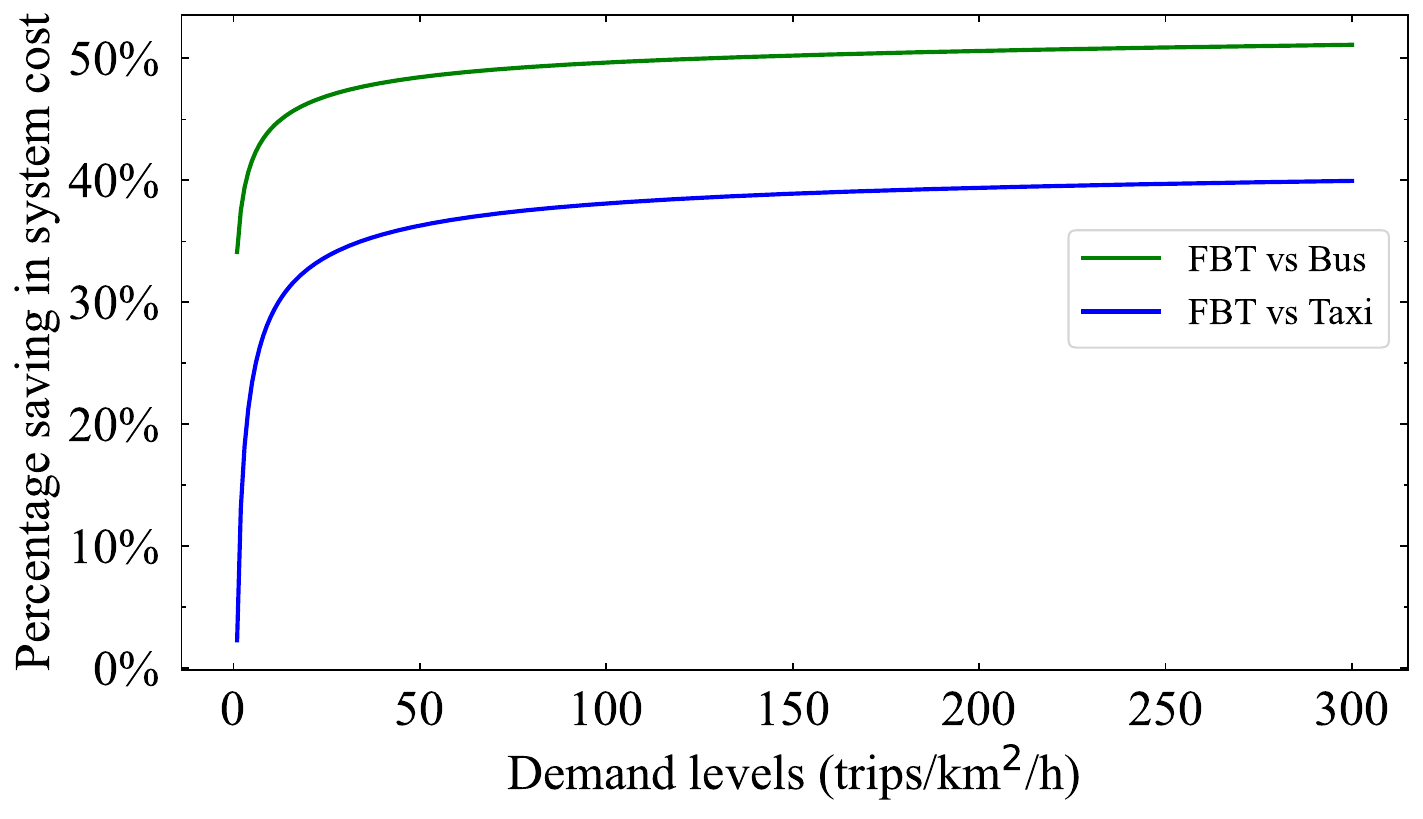}}
	\caption{System cost savings of FBT against taxi and bus systems.}
	\label{fig_system_cost_saving}
\end{figure}

\subsubsection{Parametric analysis}
The results above are determined by FBT’s unit cost parameters, particularly the capital and operational costs of trailers $\{ \pi_{\mathrm{tf}}, \pi_{\mathrm{tk}} \}$. To provide a conservative assessment, this section analyzes FBT’s performance with $\pi_{\mathrm{tf}}$ and $\pi_{\mathrm{tk}}$ scaled to 2, 4, and 6 times their baseline values. Notably, the scenarios with costs scaled by a factor of 6 represent the most conservative case, in which each trailer---despite being much smaller in size and battery---incurs the same cost as an all-purpose taxi. Even under such a conservative setting, FBT achieves up to 4\% and 18\% cost savings against taxis in high- and low-wage cities, respectively, owing to its strong EoS; see Fig. \ref{fig_system_cost_saving_var_cost}(e) and (f).
\begin{figure}[!htb]
	\centering
	\subfigure[$2\times \{ \pi_{\mathrm{tf}}, \pi_{\mathrm{tk}} \} \text{ in high-wage cities } \alpha = \$25$/hr]{\includegraphics[width=75mm]{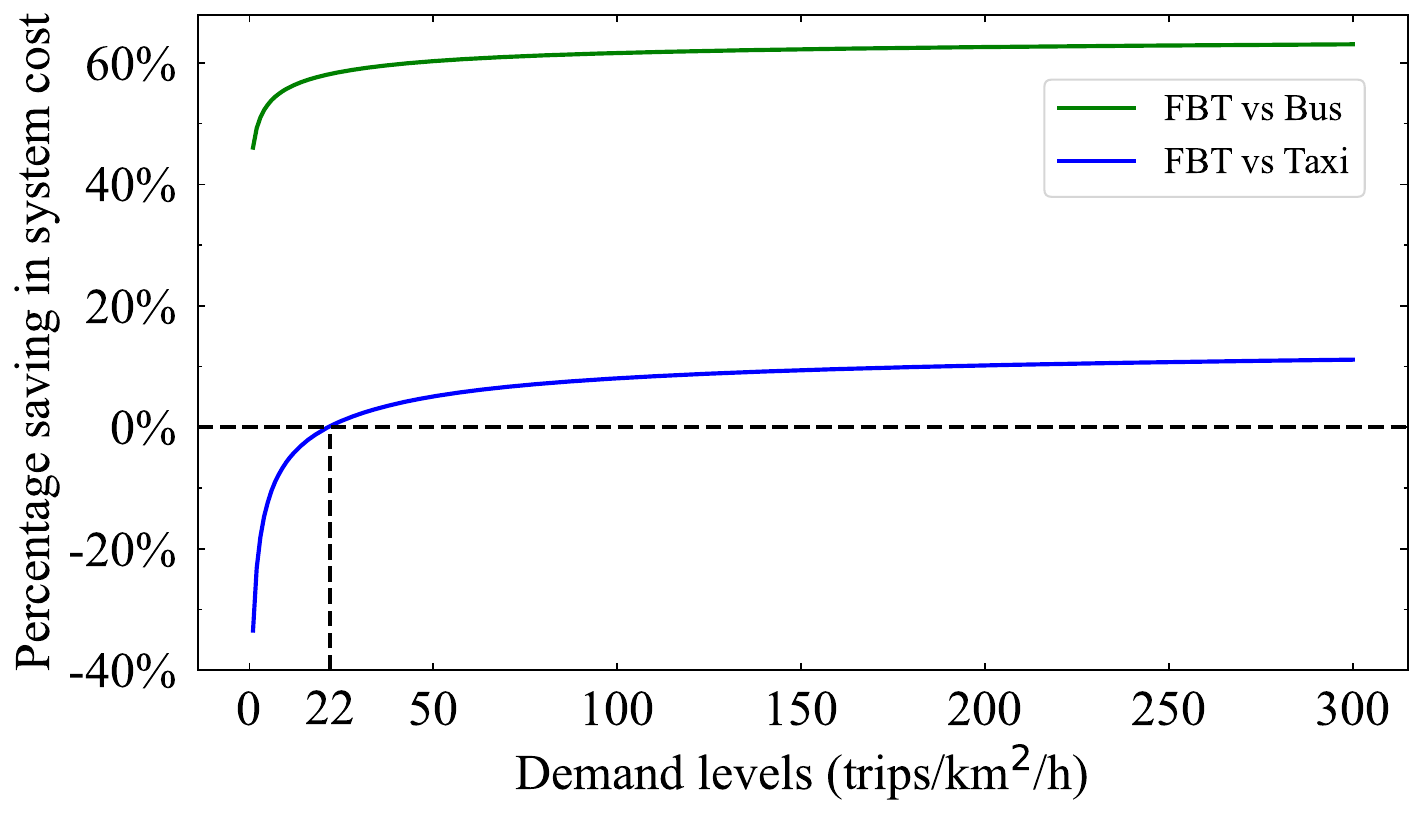}}
	\hfill
	\subfigure[$2\times \{ \pi_{\mathrm{tf}}, \pi_{\mathrm{tk}} \} \text{ in low-wage cities } \alpha = \$5$/hr]{\includegraphics[width=75mm]{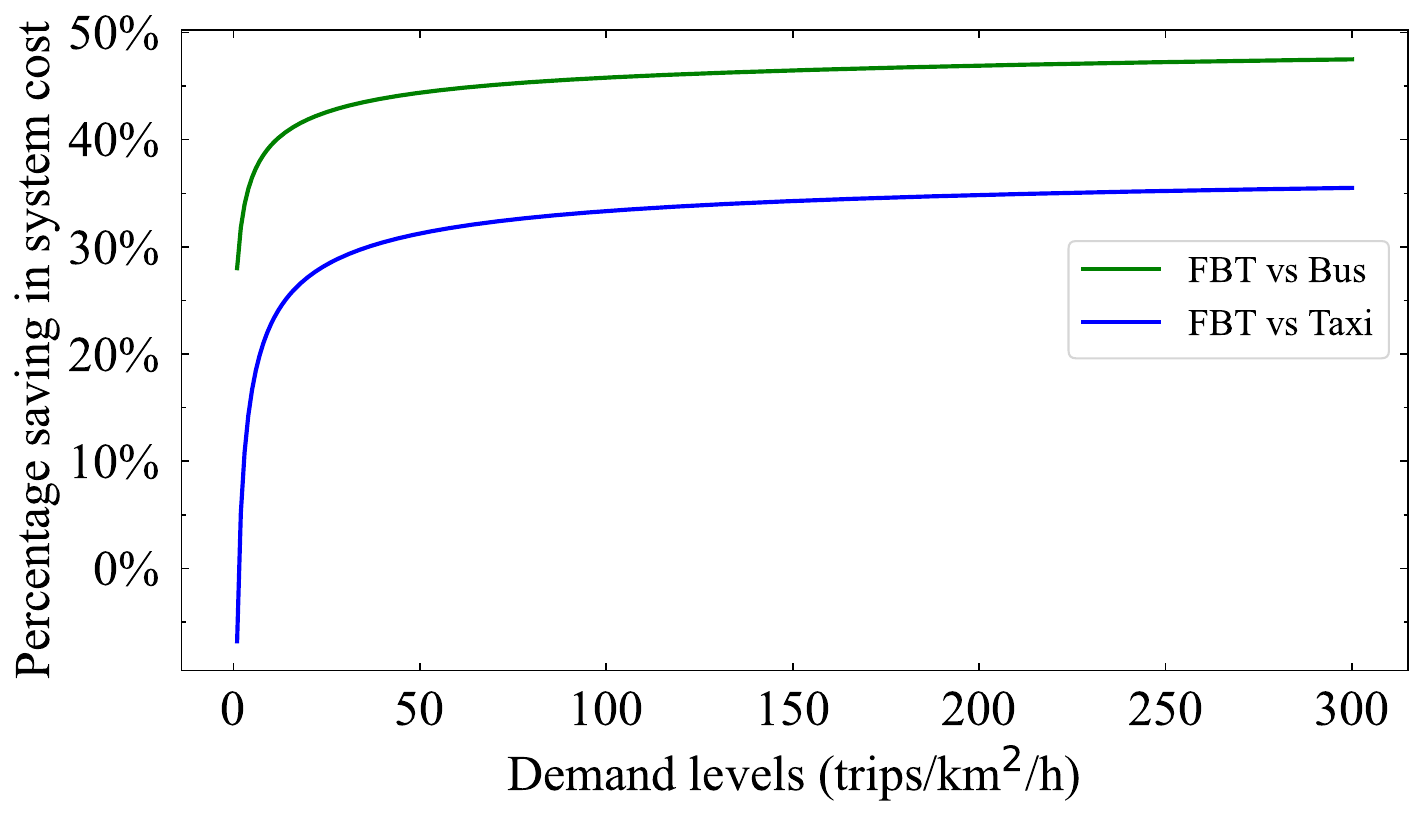}}
	\hfill
	\subfigure[$4\times \{ \pi_{\mathrm{tf}}, \pi_{\mathrm{tk}} \} \text{ in high-wage cities } \alpha = \$25$/hr]{\includegraphics[width=75mm]{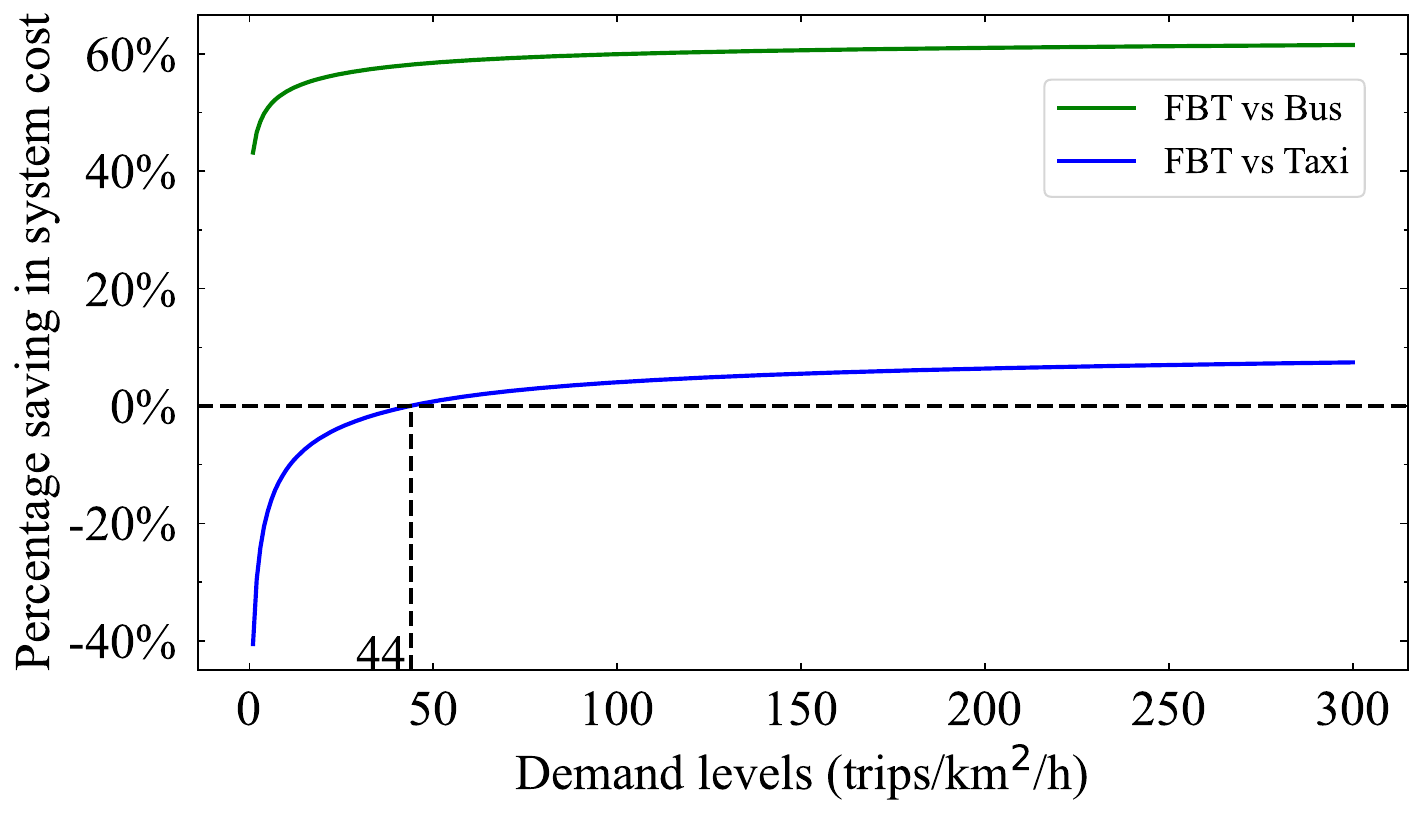}}
	\hfill
	\subfigure[$4\times \{ \pi_{\mathrm{tf}}, \pi_{\mathrm{tk}} \} \text{ in low-wage cities } \alpha = \$5$/hr]{\includegraphics[width=75mm]{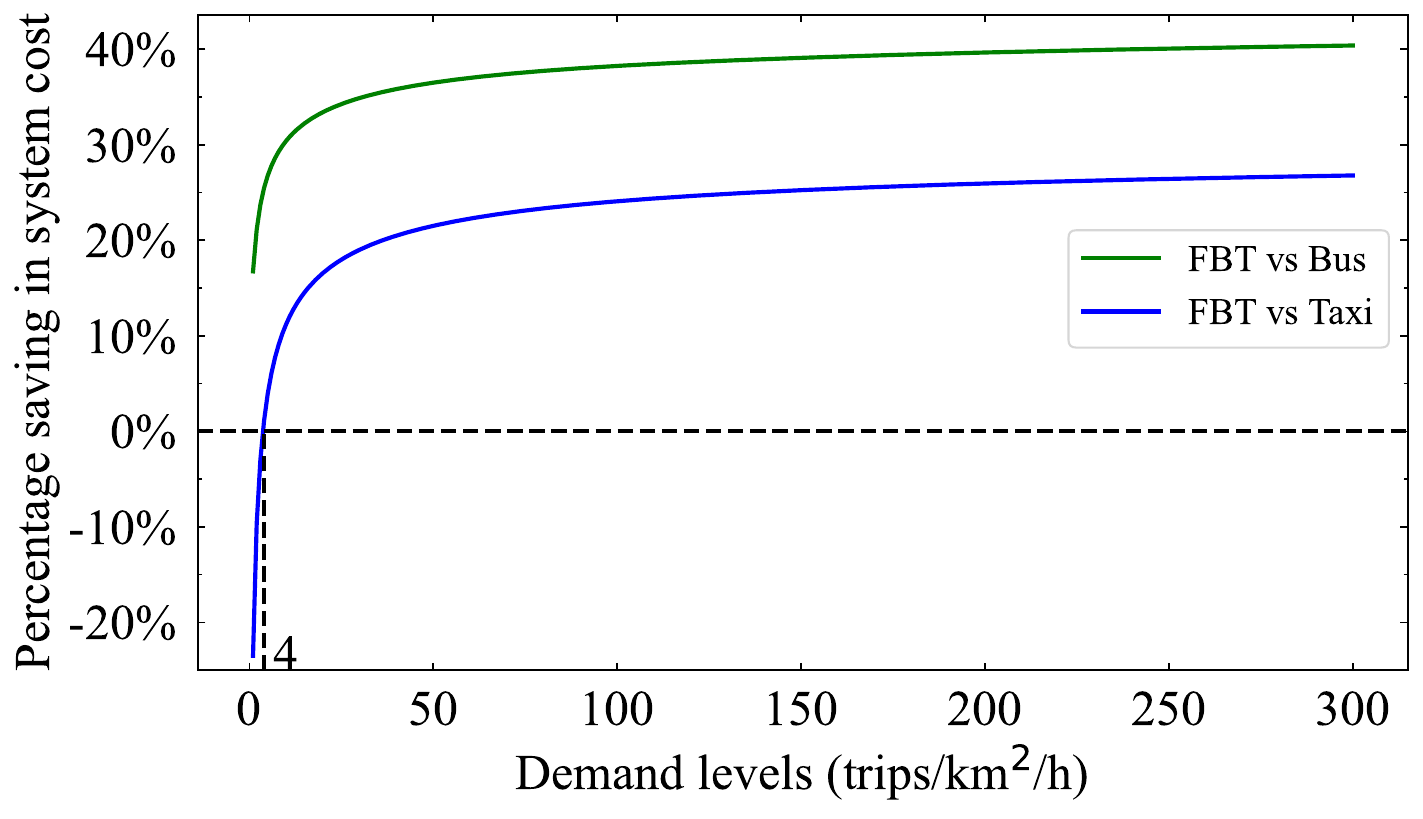}}
	\hfill
	\subfigure[$6\times \{ \pi_{\mathrm{tf}}, \pi_{\mathrm{tk}} \} \text{ in high-wage cities } \alpha = \$25$/hr]{\includegraphics[width=75mm]{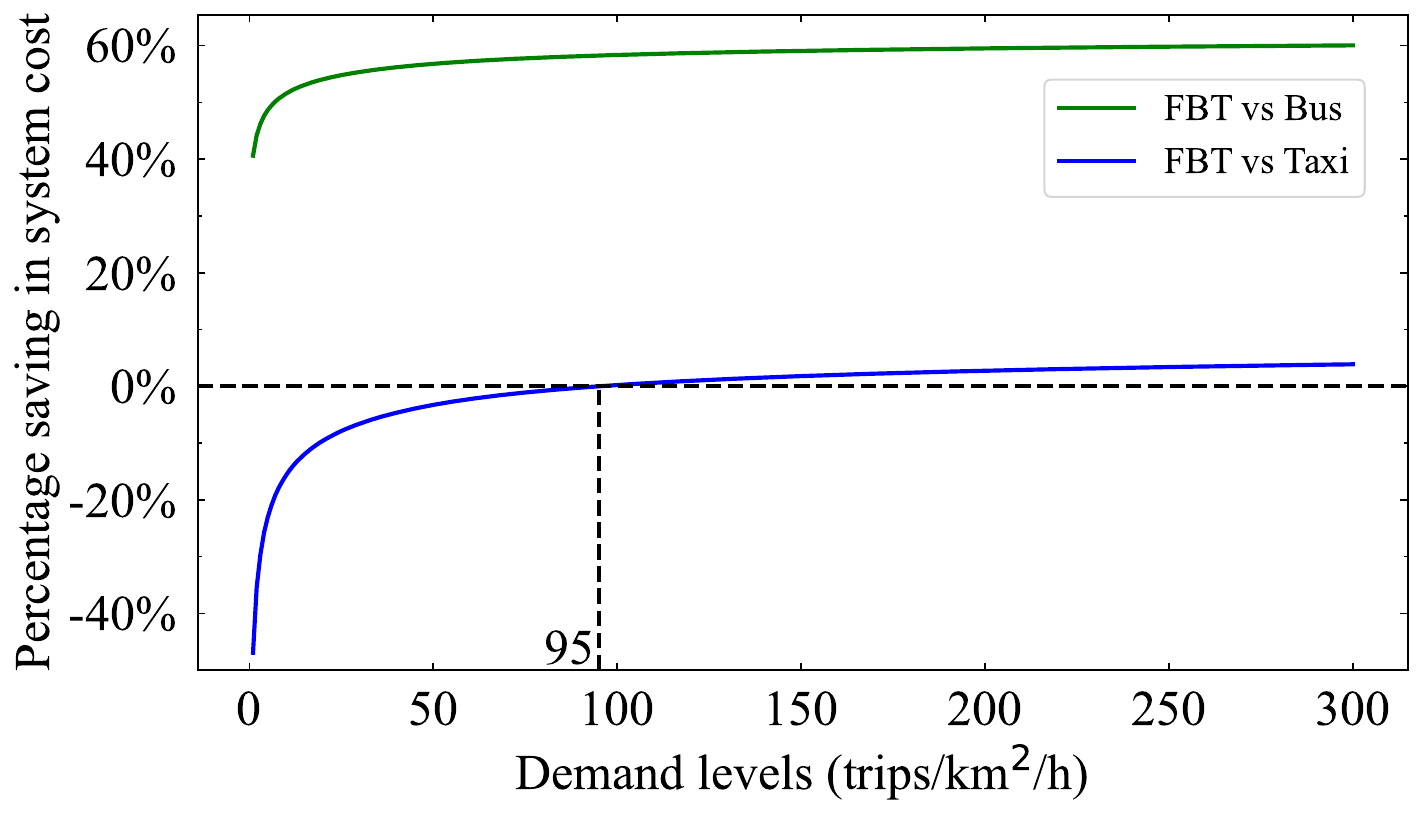}}
	\hfill
	\subfigure[$6\times \{ \pi_{\mathrm{tf}}, \pi_{\mathrm{tk}} \} \text{ in low-wage cities } \alpha = \$5$/hr]{\includegraphics[width=75mm]{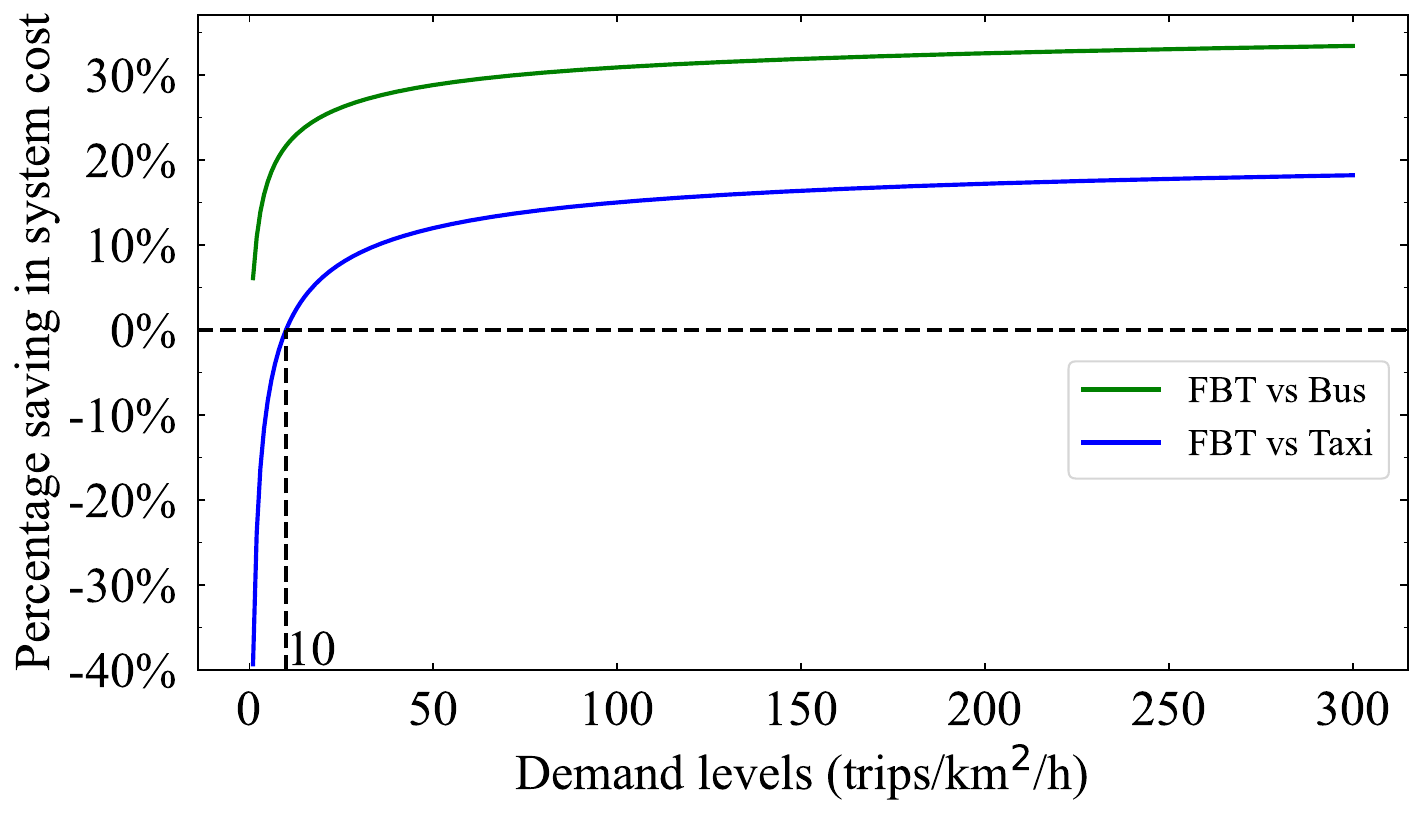}}
	\caption{Parametric analysis on unit cost settings.}
	\label{fig_system_cost_saving_var_cost}
\end{figure}

\section{Future research directions} \label{sec_agenda}
While the proof of concept above demonstrates the potential of FBT, transitioning from theoretical modeling to practical deployment requires a comprehensive research roadmap. This section proposes essential and promising future research directions, organized within a top-down research framework designed to advance foundational theories and algorithms for FBT. As illustrated in Fig. \ref{fig_agenda}, the framework is structured into five primary domains, each addressing a critical dimension of FBT strategic planning, transition planning, network design, simulation and operation, and real-world implementation. They are elaborated as follows. 

\begin{figure}[!ht]
	\centering
	\includegraphics[width=0.9\linewidth]{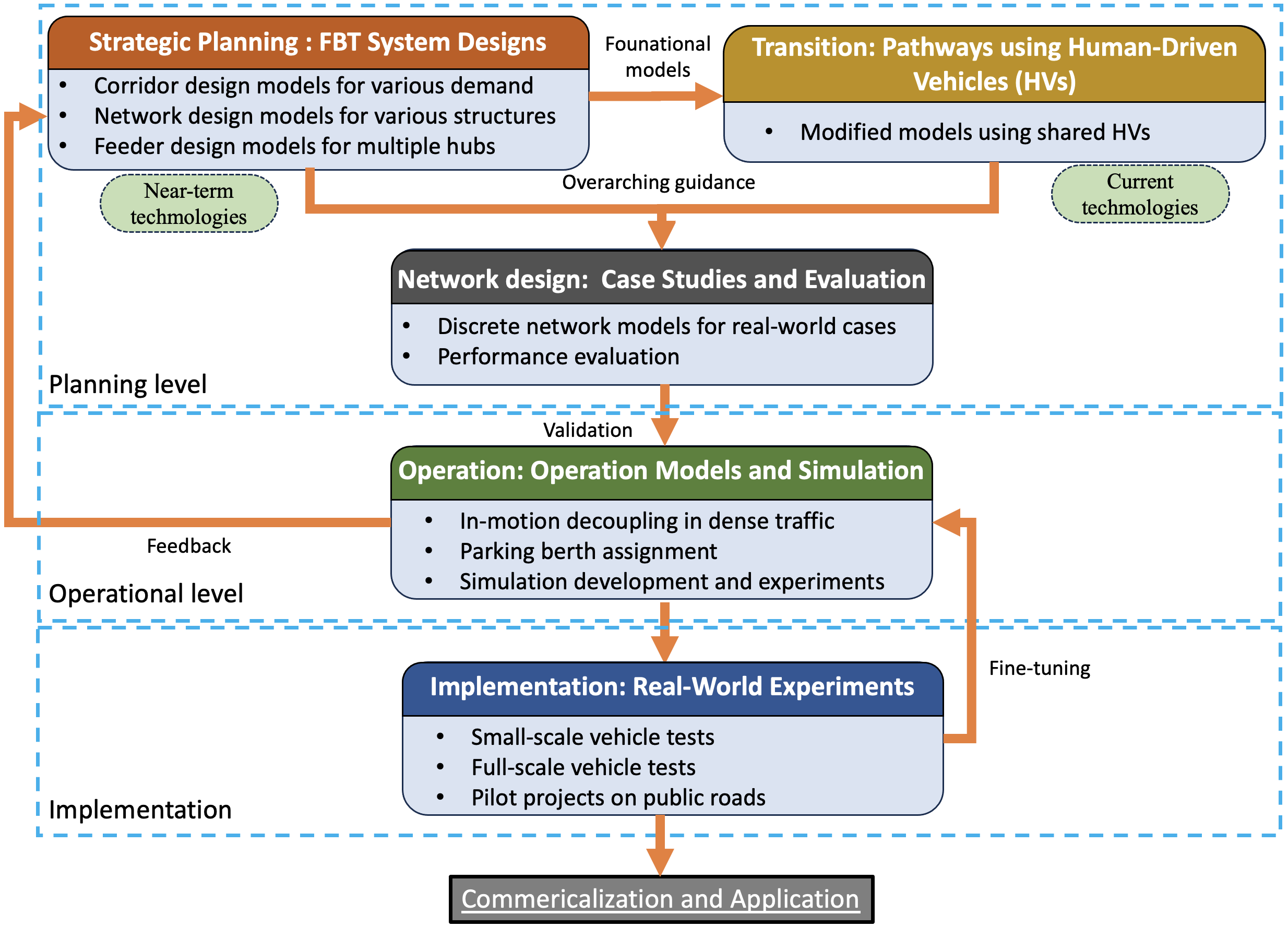}
	\caption{Future research for the development of FBT.}
	\label{fig_agenda}
\end{figure}

\subsection{Strategic planning for FBT} 
A promising first step involves extending the ring-shaped uniform corridor model to more general network structures, including heterogeneous corridors, citywide networks, and feeders to mass transit systems such as rail systems, while accounting for various demand distributions. These models should address both greenfield designs and integration with existing transport systems. Building on the Continuum Approximation (CA) framework, which has been widely used in large-scale transit studies \citep[e.g.,][]{fan2018optimal, wu2020, su2019optimal}, it is promising to derive analytical design solutions that generalize beyond specific network topologies. Such models are expected to distill fundamental design principles---relating to station spacing, platoon formation frequency, and fleet allocation---that can guide robust planning decisions across diverse urban contexts. 

A key objective for the strategic planning research is to develop tractable yet generalizable models that strike a balance between analytical rigor and practical applicability. The theoretical insights generated in this research will provide the foundation for empirical validation through real-world case studies in the following research domains.

\subsection{Transition planning for FBT} 
Recognizing the current dominance of HVs, another essential research direction is to investigate transition pathways that bridge existing vehicle technologies with FBT operations to cultivate market adoption. A straightforward but promising approach is to modify the access and egress components of the corridor, network, and FBT-as-feeder models using existing car-sharing and bike-sharing frameworks \citep{luo_joint_2021, wu2020}, while retaining FBT platoons as the trunk-line service. 

In this hybrid setting, shared trailers are parked within FBT's feeder zones, with the densities and capacities of these parking lots treated as decision variables. Passengers walk to/from the closest parking lot to access/return shared trailers and drive to the nearest FBT station for coupling into platoons. While the platoon will be led by a human driver, passengers in the trailers enjoy an ``autonomous'' travel experience, allowing them to engage in non-driving activities during transit. We expect that this hybrid model will allow for the incremental introduction of FBT elements into existing systems, enabling gradual adoption and evaluation of their impacts before large-scale deployment. 

\subsection{Specific FBT network designs} 
The third research direction involves translating the above analytical models into practice by building discrete network design models that incorporate real-world street networks, origin–destination demand matrices, and specific operational constraints related to key factors such as station locations, platoon lengths, and total fleet sizes. Case studies should be conducted across networks of varying scales---from medium-sized districts to large metropolitan areas---to demonstrate proof of concept and assess applicability.

A comprehensive evaluation framework is critical to stress-test FBT performance under diverse demand levels (including elastic demand), spatial and temporal variations, socio-economic characteristics, and network structures. Comparative analyses benchmarking FBT against conventional modes (e.g., taxis, buses) and existing MAV concepts will be instrumental in establishing the system's competitive viability.

\subsection{Simulation and operational studies} 
The fourth research direction focuses on simulation and operational studies. 
It is invaluable to develop a dedicated simulation platform for evaluating the operations of optimized FBT networks under complex and stochastic traffic conditions. The platform should capture the complexities of high-density urban environments such as Hong Kong, where severe spatial constraints pose unique operational challenges. Its primary function is to assess the system-wide impacts of in-motion trailer decoupling and generate critical insights into the feasibility and scalability of FBT in dense metropolitan contexts. Among foundational technologies for FBT operations, two essential ones are worth mentioning and warrant inclusion within the simulation platform: (1) in-motion decoupling under dense traffic, and (2) parking berth assignment for efficient stationary coupling. 

The first research should incorporate the influence of surrounding traffic into the in-motion decoupling trajectory planning. This includes accounting for uncertainties such as the stochastic deceleration of the leading vehicle in a platoon and the headway distributions of vehicles in adjacent lanes.
The second one addresses the optimal allocation of parking berths at FBT stations to trailers that arrive sequentially. Each trailer’s desired trip length, drawn from a known distribution, is revealed only at the moment of assignment to passengers. The challenge lies in minimizing the repositioning of trailers already stationed while ensuring that, upon platoon formation, trailers are ordered by non-increasing trip length. 

\subsection{Real-world experiments}
The final research direction concentrates on bridging theoretical developments and simulation findings with practical implementation through a structured experimental program. Three progressive stages can be designed: small-scale vehicle tests under controlled conditions, full-scale vehicle tests in semi-controlled environments such as campuses or industrial test fields, and pilot projects on public roads---aimed at validating system feasibility, operational safety, and public readiness.

Findings from these experiments can be used to calibrate simulation models from the fourth domain, refine control algorithms, and generate empirical evidence supporting regulatory approval and public trust. Passenger feedback and behavioral observations will further inform design optimization, ultimately guiding the pathway toward scalable, citywide deployment of FBT.

\subsection{Miscellaneous studies} 
Beyond these five research domains outlined above, two additional dimensions are critical for the successful implementation of FBT systems: \textit{infrastructure design} and \textit{regulatory alignment}. On the infrastructure side, the deployment of FBT necessitates careful consideration of existing road layouts, intersection geometries, and, in particular, station designs to accommodate stationary coupling and in-motion decoupling operations. Proactive infrastructure planning is essential to ensure operational safety, minimize disruptions to conventional traffic, and maximize the efficiency of FBT services. 

Equally important are the regulatory and institutional frameworks governing the introduction of novel vehicle types and operational models. The homologation process requires that FBT vehicles comply with all relevant safety, environmental, and operational standards before they can be legally deployed on public roads. To facilitate the smooth integration of FBT into urban mobility systems and to foster public acceptance, future research should systematically identify and address these regulatory and legal challenges.

Collectively, these efforts will form a comprehensive framework for evaluating FBT’s technical, operational, and socioeconomic viability as a novel mobility, informing future urban transportation strategies toward greater efficiency, sustainability, and accessibility.

\section{Conclusions}
This study has introduced fly-by transit (FBT) as a novel, near-term feasible modular transit system designed to deliver door-to-door, stop-less shared mobility. By integrating small-battery trailer modules for local feeder trips with high-performance leader modules for high-speed trunk-line platooning, FBT overcomes key limitations of existing MAV concepts, notably the reliance on technologically immature in-motion coupling and interior passage transfers. Instead, FBT employs stationary coupling and in-motion decoupling of tail trailers, enabling uninterrupted convoy movement while preserving technical practicality. 

As a proof of concept, optimal corridor design models are developed to demonstrate that FBT can outperform conventional buses in terms of travel time and taxis in terms of cost efficiency, thereby filling a critical gap in the spectrum of urban mobility options. In addition to conceptual and operational innovations, this study has outlined a research agenda that encompasses analytical system design, transition strategies using human-driven vehicles, comprehensive case studies, foundational operational development, and real-world experiments. Together, these efforts aim to establish the theoretical, operational, and economic basis for large-scale FBT deployment. 

By aligning with realistic technological capabilities and prioritizing patron convenience, FBT represents a pragmatic yet transformative step toward next-generation transportation systems. Its potential to reconcile accessibility, mobility, and cost-effectiveness positions it as a promising solution for sustainable, efficient, and inclusive urban transport in the coming decades.

Lastly, it is worth mentioning that FBT’s modular architecture also supports urban logistics. Trailer modules can be adapted for package movement, enabling seamless last-mile and middle-mile delivery. High-speed platooning on trunk lines facilitates rapid, reliable distribution of parcels, groceries, and other freight in dense urban environments. This dual-use capability enhances vehicle utilization and operational efficiency while meeting growing demand for integrated passenger–freight mobility. As cities seek to curb congestion and emissions from traditional delivery fleets, FBT provides a scalable, sustainable logistics alternative, strengthening its value proposition as a comprehensive mobility platform.

\section*{Appendix}

\appendix
\renewcommand{\theequation}{\Alph{section}\arabic{equation}}

\section{FBT trailer cost estimation} \label{appen_trailer_cost}
To estimate the cost of an FBT trailer, we benchmark it against mini-EVs on the market\footnote{\url{https://en.wikipedia.org/wiki/Wuling_Hongguang_Mini_EV}}, which share similar vehicle designs but are equipped with larger batteries (120–170 km range, with an average of 150 km) and more powerful motors capable of speeds over 100 km/h. We assume the only cost differential arises from the battery, which constitutes approximately 50\% of the mini-EV cost \citep{bobylev_wuling_2023, wevj12010021}, denoted as $\pi_{\text{m-EV}}$. Accordingly, the cost of an FBT trailer can be expressed as
\begin{align}
	\pi_{\mathrm{tf}}= \frac{1}{2}\pi_{\text{m-EV}} + \frac{1}{2}\pi_{\text{m-EV}}\frac{50}{150} = \frac{2}{3}\pi_{\text{m-EV}}.
\end{align}

In the current EV market, a standard electric taxi is priced at about four mini-EVs, i.e., $4\pi_{\text{m-EV}}$. This equivalence implies that the cost of one electric taxi is approximately equal to that of six FBT trailers.

\section{Expected travel distance of the leader} \label{appen_leader_distance}
\setcounter{equation}{0}
Given that the patron trip distance $\mathnormal{\ell}$ follows a uniform distribution, $\mathnormal{\ell} \sim \mathrm{U}(a, b)$, and the platoon size is $J$ (i.e., the number of patrons/trailers per dispatch), let $\mathcal{L}_J = \max_{j \in J}\{\ell_j\}$ denote the maximum trip distance among these patrons. The cumulative distribution function (CDF) of $\mathcal{L}_J$ is given by:
\[
\mathrm{CDF}(\mathcal{L}) = P(\mathcal{L}_J \leq \mathcal{L}) = \prod_{j=1}^{J} P(\ell_j \leq \mathcal{L}) = \left(\frac{\mathcal{L} - a}{b - a}\right)^{J},
\]
and the probability density function (PDF) is:
\[
\mathrm{PDF}(\mathcal{L}) = \frac{\mathrm{d}}{\mathrm{d} \mathcal{L}} \mathrm{CDF}(\mathcal{L}) = \frac{J}{(b - a)^{J}} (\mathcal{L} - a)^{J-1}.
\]
Therefore, the expected value of the maximum patron trip distance is derived as follows:
\begin{align}
	E( \mathcal{L}) &=\int _{a}^{b} \mathcal{L} \mathrm{PDF}( \mathcal{L} ) \mathrm{d}\mathcal{L} \notag\\
	&=\int _{a}^{b} \mathcal{L} \cdotp \frac{J}{( b-a)^{J}}( \mathcal{L} -a)^{J-1} \mathrm{d}\mathcal{L} \notag\\
	(\text{Let } x=\mathcal{L} -a) &=\int _{0}^{b-a}( x+a) \cdotp \frac{J}{( b-a)^{J}} x^{J-1} \mathrm{d}x\notag\\
	&=\frac{J}{( b-a)^{J}}\int _{0}^{b-a}\left( x^{J} +ax^{J-1}\right) \mathrm{d}x\notag\\
	&=\frac{J}{( b-a)^{J}} \cdotp \left(\frac{( b-a)^{J+1}}{J+1} +a\frac{( b-a)^{J}}{J}\right)\notag\\
	&=\frac{J( b-a)}{J+1} +a\notag\\
	&=\frac{Jb+a}{J+1}.
\end{align}

\section{Optimal design models for traditional bus and taxi systems} \label{appen_taxi_bus}
\setcounter{equation}{0}
For completeness, we briefly summarize the key components of the optimal design models for taxis and buses, without providing detailed derivations or explanations. For further information, readers are referred to \cite{daganzo2019public}. To maintain clarity and limit symbol proliferation, we use similar notations to represent analogous metrics across different systems; they appear in respective sections to avoid confusion.
\subsection{Taxi model}
The optimal design of taxis involves a single decision variable, i.e., the density of idle taxis, $f_i$, as given below: 
\begin{subequations} \label{eq_taxi}
	\begin{align}
		\minimize_{f_i} Z(f_i) = \frac{U_o}{\alpha} + U_p,
	\end{align}
	subject to 
	\begin{align}
		f_i & > 0,
	\end{align}
	where the costs of the taxi operator and patrons are
	\begin{align}
		U_{o} & = \pi \mathrm{_{cf}}\left(\frac{f_{i}}{\lambda } +\frac{k}{v}\sqrt{\frac{1}{f_{i}}} +\frac{W}{2v} +\frac{a+b}{2V}\right) +\pi \mathrm{_{ck}}\left( k\sqrt{\frac{1}{f_{i}}} +\frac{W}{2} +\frac{a+b}{2}\right), \\
		U_{p} & =\frac{k}{v}\sqrt{\frac{1}{f_{i}}} +\frac{W}{4v} \times 2+\frac{a+b}{2V}, 
	\end{align}
\end{subequations}
and $\pi_{\mathrm{cf}}$ and $\pi_{\mathrm{ck}}$ are unit capital and operational costs of taxis. In the numerical experiments, we set $\pi_{\mathrm{cf}}=\$0.498$/taxi/hr  and $\pi_{\mathrm{ck}}=\$0.228$/taxi/km.

The closed-form solution to (\ref{eq_taxi}) is given by
\begin{align}
	f_{i}^{*}=\left(\frac{\lambda k\left(\pi\mathrm{_{cf}}+\pi\mathrm{_{ck}}v+\alpha\right)}{2\pi\mathrm{_{cf}}v}\right)^{\frac{2}{3}},
\end{align}
which leads to
\begin{align}
	Z^* = A\lambda^{-\frac{1}{3}} + B
\end{align}
where $A$ and $B$ are constants determined by pre-given parameters.

\subsection{Bus model}
The optimal design of bus transit includes two decision variables, stop spacing $S$ and service headway $H$, as given below:
\begin{subequations} \label{eq_bus}
	\begin{align}
		\minimize_{S,H} Z(S,H) = \frac{U_o}{\alpha} + U_p
	\end{align}
	subject to:
	\begin{align}
		\lambda W\frac{a+b}{2} H\leqslant C
	\end{align}
	where the costs of the bus operator and patrons are
	\begin{align}
		U_{o} & =\frac{1}{\lambda WH}\left(\frac{\pi_{\mathrm{bf}}}{V}+\frac{\pi_{\mathrm{bf}}t_{s}}{S}+\pi_{bk}\right)+\frac{\pi_{S}}{\lambda SW}, \label{eq_Uo_App} \\
		U_p & = \frac{W+S}{2v_{w}}+\beta\frac{H}{2}+\frac{a+b}{2V}+t_{s}\frac{a+b}{2S}, \label{eq_Up_App}
	\end{align}
\end{subequations}
and $\pi_{\mathrm{bf}}$ and $\pi_{\mathrm{bk}}$ are unit capital and operational costs of buses; $C$ means bus capacity; $v_w$ is patrons' average walking speed; and $\beta$ is a weight parameter on out-of-vehicle waiting time. In the numerical experiments, we set $\pi_{\mathrm{bf}}=\pi_{\mathrm{lf}}, \pi_{\mathrm{bk}}=\pi_{\mathrm{lk}}$, $C=45$ patrons/bus, $v_w = 5$ km/hr, and $\beta = 2$.

The optimal conditions for (\ref{eq_bus}) are
\begin{align} \label{eq_S_H}
	S^{*} &=\sqrt{v_{w}\left(\frac{2\pi _{\mathrm{bf}} t_{s}}{\lambda \alpha WH^* } +\frac{2\pi_{S}}{\lambda \alpha W } +t_{s}( a+b)\right)}, \\
	H^{*} &=\sqrt{\frac{2}{\lambda \alpha \beta W }\left(\frac{\pi _{\mathrm{bf}}}{V} +\frac{\pi _{\mathrm{bf}} t_{s}}{S^*} +\pi \mathrm{_{\mathrm{b} k}}\right)}.
\end{align}

With (\ref{eq_S_H}a and b) in hand, an iterative algorithm can be devised to efficiently solve (\ref{eq_bus}) to the global optimum. (Note that $U_o$ and $U_p$ in (\ref{eq_Uo_App}, \ref{eq_Up_App}) are posynomial functions of $S,H$, and can be transformed into convex functions through geometric programming \citep{boyd2004convex}; and thus, (\ref{eq_bus}) yields a unique global optimum.)

\begin{remark}
	It is interesting to consider an idealized scenario where operational costs are negligible ($\pi_{\mathrm{bf}}= \pi_{\mathrm{bk}}=\pi_{S}=0$) and service frequency is infinite ($H \rightarrow 0$). In this extreme case, the optimized stop spacing and the minimized patrons' travel time become
	\begin{align} \label{eq_S_H}
		S^{*} &=\sqrt{v_{w}t_{s}\left( a+b\right)}, \\
		U_{p}^{*} & = \sqrt{\frac{t_{s}\left(a+b\right)}{v_{w}}}+\frac{W}{2v_{w}}+\frac{a+b}{2V}.
	\end{align}
	
	The resulting optimized door-to-door speed for the bus system, given by $\frac{a+b+W}{2U_{p}^{*}}$,
	yields approximately 10 km/h under the parameters used in our numerical experiments, and increases to about 22 km/h for patrons located directly on the bus line ($W=0$). The findings, consistent with those in \cite{daganzo2019public}, imply that the bus service speed remains close to that of cycling, even under these ideal conditions.
\end{remark}

\section*{Replication and data sharing}
All data have been included in the main text of the paper.



%
\section*{Declaration of competing interest}

The authors declare that they have no known competing financial interests or personal relationships that could have appeared to influence the work reported in this paper.

\section*{Acknowledgment}
This work was supported by the Start-up Fund of The Hong Kong Polytechnic University and, in part, by the Otto Poon Charitable Foundation Smart Cities Research Institute (SCRI) at The Hong Kong Polytechnic University (Project P0058095). We are grateful to Prof. Yu (Macro) Nie and Prof. Edward Chung for their constructive comments on an earlier draft. In a private conversation, Prof. Yueyue Fan vividly recalled her childhood imagination of a similar mobility solution. The first author also thanks his students, Jie Chen and Zongjie Pan, for assistance in producing several figures and tables. An early version of this paper was presented at the 29th International Conference of Hong Kong Society for Transportation Studies.

%


\printcredits

\bibliographystyle{cas-model2-names}

\bibliography{cas-refs}



\end{document}